\newtheorem{thm}{Theorem}
\newtheorem{proof}{Proof}
\newtheorem{lem}{Lemma}
\newtheorem{define}{Definition}
\def \qed{\hfill $\Box$} %Mark of "Proof Ended"
\title{Output Feedback Controller Design with Symbolic Observers for Cyber-physical Systems}
\author{%Rob van Glabbeek
%\institute{NICTA\\ Sydney, Australia}
%\institute{School of Computer Science and Engineering\\
%University of New South Wales\thanks{A fine university.}\\
%Sydney, Australia}
%\email{rvg@cs.stanford.edu}
%\and
Masashi Mizoguchi \qquad\qquad Toshimitsu Ushio
\institute{Graduate School of Engineering Science\\
Osaka University\\
Osaka, Japan}
\email{\quad mizoguchi@hopf.sys.es.osaka-u.ac.jp \quad\qquad ushio@sys.es.osaka-u.ac.jp}
}
\begin{document}
\maketitle

\begin{abstract}
In this paper, we design a symbolic output feedback controller of a cyber-physical system (CPS).
The physical plant is modeled by an infinite transition system.
We consider the situation that a finite abstracted system of the physical plant, called a c-abstracted system, is given.
There exists an approximate alternating simulation relation from the c-abstracted system to the physical plant.
A desired behavior of the c-abstracted system is also given, and we have a symbolic state feedback controller of the physical plant.
We consider the case where some states of the plant are not measured.
Then, to estimate the states with abstracted outputs measured by sensors, we introduce a finite abstracted system of the physical plant, called an o-abstracted system, such that there exists an approximate simulation relation.
The relation guarantees that an observer designed based on the state of the o-abstracted system estimates the current state of the plant.
We construct a symbolic output feedback controller by composing these systems.
By a relation-based approach, we proved that the controlled system approximately exhibits the desired behavior.    

%A physical plant is modeled by an infinite transition system, and its finite abstracted model called a c-abstracted system is given.
%Its desired behavior is specified by a finite transition system.
%We introduce a finite abstracted model of the plant called an o-abstracted system. 
%A symbolic observer is designed using the o-abstracted system such that there exists an approximate contractive simulation relation from the physical plant to the observer.
%Then, we construct an output feedback controller by composition of these systems such that there exists an approximate contractive alternating simulation relation from the controller to the physical plant.
\end{abstract}

\section{Introduction}

Over the past few decades, the control of hybrid systems has been studied \cite{goebel2012hybrid}.
It is a main issue in hybrid systems to find an algorithmic procedure for designing a finite symbolic controller.  
The physical plant has real-valued variables, and its model is an infinite-state system that has too many uncertainties.
Then, finite state abstraction is introduced in verification and synthesis problems of hybrid systems \cite{Tabuada:2009:VCH:1717907}.
The behavior of a finite system can be regarded as a process.
A simulation / bisimulation relation is a key notion that evaluates correctness between two processes \cite{Milner:1989:CC:534666}.
If there exists a simulation / bisimulation relation, the abstracted plant correctly describes the behavior of the physical plant.
But, these relations are often too restrictive in terms of the abstraction because the state set of a hybrid system is infinite \cite{871304}.
Recently, approximate abstraction is considered with approximate simulation / bisimulation relations.
These relations are evaluated by Lyapunov-like functions called simulation / bisimulation functions \cite{4200859, Girard2011568, 5342460, Pola20082508, 4610035}.

Another key concept between two processes is an alternating simulation relation proposed in \cite{Alur98alternatingrefinement}.
This notion is used not only in multi-agent systems and game automata but also to describe a relationship between a feedback controller and a plant.
The control problem of a finite system can be solved by exact alternating simulation relations \cite{Tabuada:2009:VCH:1717907}.
Approximated alternating simulation relations are also introduced as well as simulation / bisimulation relations to consider abstraction \cite{4200859}.

A cyber-physical system (CPS) contains communication networks, which cause disturbances and noises such as data dropouts.
A control performance is often degraded by them, so robustness of the CPS is important.
An approach to design of a symbolic controller under the existence of disturbances is shown in \cite{Bloem:2010:RPL:2144310.2144356, 5351139}, and approximated relations are used in \cite{pola2009symbolic}. 
It is shown that the input-output dynamical stability (IODS) is preserved under the abstraction if there exists an approximated alternating simulation relation \cite{6760490, rungger2014abstracting, 2013arXiv1310.5199R, 6882838}.

On the other hand, the design methods of the output feedback controller have been proposed.
There are a lot of approaches such as game strategies and specification-based estimators \cite{chatterjee2007algorithms, Ehlers:2015:ERS:2728606.2728626, DBLP:journals/tac/GhaemiV14}.
Especially, finite state abstraction is used in \cite{7039992, 6199967, 6882777}.
But, these are not based on (alternating) simulation relations.  
Relation-based approaches can be seen in \cite{cassandras2009introduction, 110009640556} where an observer-based control problem for discrete event systems is considered.  
These approaches are based on the exact alternating simulation / bisimulation relations.
The output feedback controller based on approximated relations is designed in \cite{mizoguchi2015observer} under the assumption that the relation is based on distance between the state of the physical plant and the state of the abstracted plant.

In this paper, we propose a symbolic output feedback controller without introducing the distance.
It is shown that there exists an approximate contractive alternating simulation relation from the proposed symbolic controller to the physical plant.
Since the approximate contractive alternating simulation relation has the contraction property, the abstraction error between the physical plant and the symbolic controller does not diverge.
Thus, the plant controlled by the proposed output feedback controller approximately exhibits the desired behavior.

The rest of this paper is organized as follows.
In Section \ref{Sec_preliminary}, we define a system as a transition system, and introduce notions of approximated relations.
Moreover, we review an approach to design of a symbolic state feedback controller.
In Section \ref{Sec_observer}, we design a symbolic observer based on approximated relations. 
In Section \ref{Sec_output_feedback_control_system}, we construct an output feedback controller. 
It is shown that the controlled plant by the controller approximately exhibits the desired behavior.
The proof is shown in Section \ref{Sec_proof_of_main_results}.
In Section \ref{Sec_illustrative_example}, we consider an example to demonstrate how the proposed controller works.

\section{Preliminaries}
\label{Sec_preliminary}

\subsection{Simulation Relations}

In this subsection, we review several fundamental notions for transition systems \cite{ rungger2014abstracting, 2013arXiv1310.5199R}.

\begin{define}
\label{def_system}
A system $S$ is a tuple $(X, X_0, U, r)$, where:
\begin{itemize}
%\begin{minipage}{0.5\hsize}
\item $X$ is a set of states;
\item $X_0 \subseteq X$ is a set of initial states;
%\end{minipage}
%\begin{minipage}{0.5\hsize}
\item $U$ is a set of inputs;
\item $r: X \times U \to 2^{X}$ is a transition map.
%\item $Y$ is a set of outputs;
%\item $H: X \to Y$ is an output map.
%\end{minipage}
\end{itemize}
For any $x \in X$, let $U(x) = \{ u \in U \ | \ r(x, u) \neq \emptyset \}$. 
%\begin{itemize}
%\item $X$ is a set of states;
%\item $X_0 \subseteq X$ is a set of initial states;
%\item $U$ is a set of inputs;
%\item $r: X \times U \to 2^{X}$ is a transition map;
%\item $Y$ is a set of outputs;
%\item $H: X \to Y$ is an output map.
%\end{itemize}
\end{define}

Let $S_1=(X_1, X_{10}, U_1, r_1)$ and $S_2=(X_2, X_{20}, U_2, r_2)$ be two systems.
For a relation $R \subseteq X_1 \times X_2 \times U_1 \times U_2$ over the state sets $X_1, X_2$ and the input sets $U_1, U_2$, denoted by $R_X \subseteq X_1 \times X_2$ is a projection of $R$ to the state sets $X_1, X_2$ defined as follows:
\begin{align*}
R_X = \{ (x_1, x_2) \in X_1 \times X_2 \ | \ \exists u_1 \in U_1, \exists u_2 \in U_2: (x_1, x_2, u_1, u_2) \in R \}.
\end{align*}

%\begin{define}
%\label{def_nonblocking}
%A system $S=(X, X_0, U, r, Y, H)$ is nonblocking if, for any $x \in X$, there exists $u \in U$ such that $r(x, u) \neq \emptyset$.
%\end{define}

\begin{define}
\label{def_acSR}
Let $S_1=(X_1, X_{10}, U_1, r_1)$ and $S_2=(X_2, X_{20}, U_2, r_2)$ be two systems,
%with the same input sets $U_1 = U_2$,
let $\kappa, \lambda \in \mathbb{R}_{\ge 0}$, $\beta \in [0, 1[$ be some parameters, and consider a map ${\sf d}: U_1 \times U_2 \to \mathbb{R}_{\ge 0}$.
We call a parameterized (by $\epsilon \in [\kappa, \infty[$) relation $R(\epsilon) \subseteq X_1 \times X_2 \times U_1 \times U_2$ a $\kappa$-approximate $(\beta, \lambda)$-contractive simulation relation ($(\kappa, \beta, \lambda)$-acSR) from $S_1$ to $S_2$ with ${\sf d}$ if $R(\epsilon) \subseteq R(\epsilon')$ holds for all $\epsilon \le \epsilon'$ and the following two conditions hold for all $\epsilon \in [\kappa, \infty[$.
\begin{enumerate}
\item $\forall x_{10} \in X_{10}, \exists x_{20} \in X_{20}: (x_{10}, x_{20}) \in R_X(\kappa)$;
\item $\forall (x_1, x_2) \in R_X(\epsilon)$, $\forall u_1 \in U_1(x_1)$, $\exists u_2 \in U_2(x_2)$:
\begin{align*}
(x_1, x_2, u_1, u_2) \in R(\epsilon) \ \land \ \forall x'_1 \in r_1(x_1, u_1), \exists x'_2 \in r_2(x_2, u_2): (x'_1, x'_2) \in R_X(\kappa + \beta \epsilon + \lambda {\sf d}({u}_1, u_2)).
\end{align*}
\end{enumerate}
%The parameterized (by $\epsilon \in [\kappa, \infty[$) relation $R_X(\epsilon) \subseteq X_1 \times X_2$ is the projection of a relation induced by $R(\epsilon)$ defined as follows:
%\begin{align*}
%R_X(\epsilon) = \{ (x_1, x_2) \in X_1 \times X_2 \ | \ \exists u_1 \in U_1, \exists u_2 \in U_2: (x_1, x_2, u_1, u_2) \in R(\epsilon) \}.
%\end{align*}
We call $R(\epsilon)$ a simulation relation (SR) from $S_1$ to $S_2$ if $R(\epsilon)$ is a $(0,0,0)$-acSR from $S_1$ to $S_2$.
\end{define}

\begin{define}
\label{def_acASR}
Let $S_1=(X_1, X_{10}, U_1, r_1)$ and $S_2=(X_2, X_{20}, U_2, r_2)$ be two systems,
%with the same input sets $U_1 = U_2$,
let $\kappa, \lambda \in \mathbb{R}_{\ge 0}$, $\beta \in [0, 1[$ be some parameters, and consider a map ${\sf d}: U_1 \times U_2 \to \mathbb{R}_{\ge 0}$.
We call a parameterized (by $\epsilon \in [\kappa, \infty[$) relation $R(\epsilon) \subseteq X_1 \times X_2 \times U_1 \times U_2$ a $\kappa$-approximate $(\beta, \lambda)$-contractive alternating simulation relation ($(\kappa, \beta, \lambda)$-acASR) from $S_1$ to $S_2$ with ${\sf d}$ if $R(\epsilon) \subseteq R(\epsilon')$ holds for all $\epsilon \le \epsilon'$ and the following two conditions hold for all $\epsilon \in [\kappa, \infty[$.
\begin{enumerate}
\item $\forall x_{10} \in X_{10}, \exists x_{20} \in X_{20}: (x_{10}, x_{20}) \in R_X(\kappa)$;
\item $\forall (x_1, x_2) \in R_X(\epsilon)$, $\forall u_1 \in U_1(x_1)$, $\exists u_2 \in U_2(x_2)$:
\begin{align*}
(x_1, x_2, u_1, u_2) \in R(\epsilon) \ \land \ \forall x'_2 \in r_2(x_2, u_2), \exists x'_1 \in r_1(x_1, u_1): (x'_1, x'_2) \in R_X(\kappa + \beta \epsilon + \lambda {\sf d}(u_1, u_2)).
\end{align*}
\end{enumerate}
%The parameterized (by $\epsilon \in [\kappa, \infty[$) relation $R_X(\epsilon) \subseteq X_1 \times X_2$ is the projection of a relation induced by $R(\epsilon)$ defined as follows:
%\begin{align*}
%R_X(\epsilon) = \{ (x_1, x_2) \in X_1 \times X_2 \ | \ \exists u_1 \in U_1, \exists u_2 \in U_2: (x_1, x_2, u_1, u_2) \in R(\epsilon) \}.
%\end{align*}
We call $R(\epsilon)$ an alternating simulation relation (ASR) from $S_1$ to $S_2$ if $R(\epsilon)$ is a $(0,0,0)$-acASR from $S_1$ to $S_2$.
\end{define}

\begin{define}
\label{def_composed_system}
Let $S_1=(X_1, X_{10}, U_1, r_1)$ and $S_2=(X_2, X_{20}, U_2, r_2)$ be two systems, and let $R \subseteq X_1 \times X_2 \times U_1 \times U_2$ be a relation.
We define the composition of $S_1$ and $S_2$ with respect to $R$, denoted by $S := S_1 \times_{R} S_2 = (X, X_0, U, r)$ where:
\begin{itemize}
%\begin{minipage}{0.5\hsize}
\item $X = X_1 \times X_2$;
\item $X_0 = (X_{10} \times X_{20}) \cap R_X$;
%\end{minipage}
%\begin{minipage}{0.5\hsize}
\item $U = U_1 \times U_2$;
%\item $Y = Y_1 \times Y_2$;
%\item $H: X \to Y$ is defined as follows:\\
%\ \ \ \ \ \ \ \ \ \ \ \ \ \ \ $H((x_1, x_2)) = (H_1(x_1) , H_2(x_2)).$
%\end{minipage}
\item $r: X \times U \to 2^{X}$ is defined as follows: $(x'_1, x'_2) \in r((x_1, x_2), (u_1, u_2))$ if and only if  
\begin{align*}
x'_1 \in r_1(x_1, u_1) \ \land \ x'_2 \in r_2(x_2, u_2) \ \land \ (x_1, x_2, u_1, u_2) \in R \ \land \ (x'_1, x'_2) \in R_X.
\end{align*}
%\begin{itemize}
%\item $X = X_1 \times X_2$;
%\item $X_0 = (X_{10} \times X_{20}) \cap R_X$;
%\item $U = U_1 \times U_2$;
%\item $r: X \times U \to 2^{X}$ is defined as follows: $(x'_1, x'_2) \in r((x_1, x_2), (u_1, u_2))$ if and only if  
%\begin{align*}
%x'_1 \in r_1(x_1, u_1) \land x'_2 \in r_2(x_2, u_2) \land (x_1, x_2, u_1, u_2) \in R \land (x'_1, x'_2) \in R_X.
%\end{align*}
%\item $Y = Y_1 \times Y_2$;
%\item $H: X \to Y$ is defined as follows:
%\begin{align*}
%H((x_1, x_2)) = H_1(x_1) \cup H_2(x_2).
%\end{align*}
%\end{itemize}
%where ${R}_{X} \subseteq {X}_1 \times {X}_2$ is the projection of a relation induced by ${R}$ defined as follows:
%\begin{align*}
%{R}_{X} = \{ ({x}_1, x_2) \in X_1 \times X_2 \ | \ \exists {u}_1 \in {U}_1, \exists {u}_2 \in {U}_2: ({x}_1, {x}_2, {u}_1, {u}_2) \in {R} \}.
%\end{align*}
\end{itemize}
If $R(\epsilon)$ is a $(\kappa, \beta, \lambda)$-acSR or a $(\kappa, \beta, \lambda)$-acASR from $S_1$ to $S_2$ with $\sf d$, we replace the above definitions of $X_0$ and $r$ with the following conditions:
\begin{itemize}
\item $X_0 = (X_{10} \times X_{20}) \cap R_X(\kappa)$;
\item $r: X \times U \to 2^{X}$ is defined as follows: $(x'_1, x'_2) \in r((x_1, x_2), (u_1, u_2))$ if and only if  
\begin{align*}
x'_1 \in r_1(x_1, u_1) \ \land \ x'_2 \in r_2(x_2, u_2) \ &\land \ (x_1, x_2, u_1, u_2) \in R(e(x_1, x_2)) \\
&\land \ (x'_1, x'_2) \in R_X(\kappa + \beta e(x_1, x_2) + \lambda {\sf d}(u_1, u_2)),
\end{align*}
where $e(x_1, x_2) := \inf\{ \epsilon \in \mathbb{R}_{\ge 0} \ | \ (x_1, x_2) \in R_X(\epsilon) \}$. 
\end{itemize}    
\end{define}

\subsection{State Feedback}
\label{Sec_state_feedback}

In this subsection, we review the state feedback control \cite{ rungger2014abstracting, 2013arXiv1310.5199R}.
A physical plant to be controlled, denoted by $S=(X, X_0, U, r)$, is a discrete time system, and its state set $X$ is a Euclidean space.
Thus, $S$ is an infinite transition system. 
In order to design a digital controller, we introduce a finite abstracted model of the plant that is implemented in the cyber space.
We consider an abstracted system $\hat{S}=(\hat{X}, \hat{X}_0, \hat{U}, \hat{r})$ of the plant $S$, called a c-abstracted system, such that there exists a $(\kappa, \beta, \lambda)$-acASR ${R}(\epsilon) \subseteq \hat{X} \times {X} \times \hat{U} \times {U}$ from $\hat{S}$ to $S$.
A desired behavior of the plant is described by $\hat{S}_C=(\hat{X}_C, \hat{X}_{C0}, \hat{U}_C, \hat{r}_C)$ based on $\hat{S}$ in such a way that there exists an ASR $\hat{R}_C \subseteq \hat{X}_C \times \hat{X} \times \hat{U}_C \times \hat{U}$ from $\hat{S}_C$ to $\hat{S}$.
Then, the following theorem shows the existence of a symbolic state feedback controller \cite{rungger2014abstracting, 2013arXiv1310.5199R}.
%$S$ is the physical plant to be controlled, $\hat{S}$ is an abstracted model of $S$, and $\hat{S}_C$ describes a desired behavior of $\hat{S}$. 

\begin{thm}
\label{tabuada}
Let $\hat{S}_C=(\hat{X}_C, \hat{X}_{C0}, \hat{U}_C, \hat{r}_C)$, $\hat{S}=(\hat{X}, \hat{X}_0, \hat{U}, \hat{r})$, and $S=(X, X_0, U, r)$ be systems, let $\kappa, \lambda \in \mathbb{R}_{\ge 0}$, $\beta \in [0, 1[$ be some parameters, and consider a map ${\sf d}: U \times \hat{U} \to \mathbb{R}_{\ge 0}$.
Assume that there exist an ASR $\hat{R}_C \subseteq \hat{X}_C \times \hat{X} \times \hat{U}_C \times \hat{U}$ from $\hat{S}_C$ to $\hat{S}$ and a $(\kappa, \beta, \lambda)$-acASR $R(\epsilon) \subseteq \hat{X} \times X \times \hat{U} \times U$ from $\hat{S}$ to $S$ with ${\sf d}$.
Then, the following relation $R_C(\epsilon) \subseteq (\hat{X}_C \times \hat{X}) \times X \times (\hat{U}_C \times \hat{U}) \times U$ is a $(\kappa, \beta, \lambda)$-acASR from $S_C := \hat{S}_C \times_{\hat{R}_C} \hat{S}$ to $S$ with ${\sf d}_C((\hat{u}_C, \hat{u}), u) = {\sf d}(\hat{u}, u)$.
\begin{align}
\label{eq_R_C(epsilon)}
R_C(\epsilon) = \{ ((\hat{x}_C, \hat{x}), x, (\hat{u}_C, \hat{u}), u) \ | \ (\hat{x}, x, \hat{u}, u) \in {R}(\epsilon) \ \land \ (\hat{x}_C, \hat{x}) \in \hat{R}_{CX} \}.
\end{align}
%where $\hat{R}_{CX} \subseteq \hat{X}_C \times \hat{X}$ is the projection of a relation induced by $\hat{R}_C$ defined as follows:
%\begin{align*}
%\hat{R}_{CX} = \{ (\hat{x}_C, \hat{x}) \in \hat{X}_C \times \hat{X} \ | \ \exists \hat{u}_C \in \hat{U}_C, \exists \hat{u} \in \hat{U}: (\hat{x}_C, \hat{x}, \hat{u}_C, \hat{u}) \in \hat{R}_C \}.
%\end{align*} 
\end{thm}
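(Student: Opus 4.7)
The plan is to verify directly the three requirements of Definition \ref{def_acASR} for $R_C(\epsilon)$: monotonicity in $\epsilon$, the initial-state condition, and the alternating-transition condition. A preliminary observation from (\ref{eq_R_C(epsilon)}) is that the projection simplifies to $R_{CX}(\epsilon) = \{((\hat{x}_C, \hat{x}), x) \mid (\hat{x}, x) \in R_X(\epsilon) \land (\hat{x}_C, \hat{x}) \in \hat{R}_{CX}\}$, so the $\hat{R}_C$-part contributes no $\epsilon$-dependence, and monotonicity of $R_C$ descends immediately from that of $R$.

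For the initial condition, fix an initial state $(\hat{x}_{C0}, \hat{x}_0)$ of $S_C$; by the composed system's initial set we have $\hat{x}_0 \in \hat{X}_0$ and $(\hat{x}_{C0}, \hat{x}_0) \in \hat{R}_{CX}$. Applying condition~1 of the acASR $R(\epsilon)$ at $\hat{x}_0$ yields $x_0 \in X_0$ with $(\hat{x}_0, x_0) \in R_X(\kappa)$, which by the formula above gives $((\hat{x}_{C0}, \hat{x}_0), x_0) \in R_{CX}(\kappa)$.

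For the transition condition, take $((\hat{x}_C, \hat{x}), x) \in R_{CX}(\epsilon)$ and any $(\hat{u}_C, \hat{u}) \in U_{S_C}((\hat{x}_C, \hat{x}))$. Non-emptiness of the composed transition forces $(\hat{x}_C, \hat{x}, \hat{u}_C, \hat{u}) \in \hat{R}_C$, $\hat{u}_C \in \hat{U}_C(\hat{x}_C)$, and $\hat{u} \in \hat{U}(\hat{x})$. Invoke condition~2 of the acASR $R(\epsilon)$ with input $\hat{u}$ to produce $u \in U(x)$ satisfying $(\hat{x}, x, \hat{u}, u) \in R(\epsilon)$ and, for every $x' \in r(x, u)$, some $\hat{x}' \in \hat{r}(\hat{x}, \hat{u})$ with $(\hat{x}', x') \in R_X(\kappa + \beta \epsilon + \lambda {\sf d}(\hat{u}, u))$. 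This $u$ is my candidate witness, and $((\hat{x}_C, \hat{x}), x, (\hat{u}_C, \hat{u}), u) \in R_C(\epsilon)$ is immediate from (\ref{eq_R_C(epsilon)}). For each such $x'$, feed the resulting $\hat{x}'$ into the ASR $\hat{R}_C$ at the already-fixed pair $(\hat{x}_C, \hat{x}, \hat{u}_C, \hat{u})$: this produces $\hat{x}'_C \in \hat{r}_C(\hat{x}_C, \hat{u}_C)$ with $(\hat{x}'_C, \hat{x}') \in \hat{R}_{CX}$. Assembling $\hat{x}'_C \in \hat{r}_C(\hat{x}_C, \hat{u}_C)$, $\hat{x}' \in \hat{r}(\hat{x}, \hat{u})$, $(\hat{x}_C, \hat{x}, \hat{u}_C, \hat{u}) \in \hat{R}_C$, and $(\hat{x}'_C, \hat{x}') \in \hat{R}_{CX}$ places $(\hat{x}'_C, \hat{x}') \in r_{S_C}((\hat{x}_C, \hat{x}), (\hat{u}_C, \hat{u}))$, and the projection formula gives $((\hat{x}'_C, \hat{x}'), x') \in R_{CX}(\kappa + \beta \epsilon + \lambda {\sf d}(\hat{u}, u))$. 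Since ${\sf d}_C((\hat{u}_C, \hat{u}), u) = {\sf d}(\hat{u}, u)$ by hypothesis, the error budget matches exactly.

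The main obstacle is the quantifier choreography at the alternating step: I must commit to the witness input $u$ before the plant's successor $x'$ is revealed, then for each $x'$ pull back to a $\hat{x}'$ through the acASR of $R$, and only afterward lift $\hat{x}'$ to a $\hat{x}'_C$ via the ASR of $\hat{R}_C$. The argument closes because the abstracted-plant input $\hat{u}$ is already fixed by the composed system's input, so the ASR of $\hat{R}_C$ can be invoked with exactly that $\hat{u}$ rather than requiring any further input alignment; and since $\hat{R}_C$ is a $(0,0,0)$-acASR, this lifting incurs no extra abstraction error beyond the $\kappa + \beta \epsilon + \lambda {\sf d}(\hat{u}, u)$ already provided by $R$.
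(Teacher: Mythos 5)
Your proof is correct, but note that the paper itself never proves Theorem \ref{tabuada}: it is imported verbatim from the cited references on state feedback refinement, so there is no in-paper argument to compare against. Your direct verification --- monotonicity inherited from $R(\epsilon)$, the initial condition via condition~1 of the acASR plus $(\hat{x}_{C0},\hat{x}_0)\in\hat{R}_{CX}$, and the alternating step that first commits to $u$ via condition~2 of $R(\epsilon)$, pulls each $x'\in r(x,u)$ back to some $\hat{x}'\in\hat{r}(\hat{x},\hat{u})$, and then lifts $\hat{x}'$ to $\hat{x}'_C$ through $\hat{R}_C$ --- is exactly the canonical argument, and it is also the skeleton the paper reuses in Section \ref{Sec_proof_of_main_results}. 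The one point to be explicit about is the quantifier issue you flag yourself: Definition \ref{def_acASR} only asserts that \emph{some} witness input $\hat{u}$ satisfies the successor-matching clause; it does not literally say that \emph{every} $\hat{u}$ with $(\hat{x}_C,\hat{x},\hat{u}_C,\hat{u})\in\hat{R}_C$ does, yet you invoke the matching at the particular $\hat{u}$ fixed by the composed input $(\hat{u}_C,\hat{u})\in U_{S_C}((\hat{x}_C,\hat{x}))$, where non-emptiness of the composed transition only yields membership of the four-tuple in $\hat{R}_C$. This is closed by reading membership of the four-tuple as certifying the input as a witness (the ``extended relation'' convention of the source literature); since the paper makes precisely the same move in its proofs of the lemmas and of Theorem \ref{result_main}, your argument is at the same level of rigor as the rest of the paper and I would not count it as a gap.
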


Theorem \ref{tabuada} implies that $S_C$, which is the composition of the abstracted systems $\hat{S}_C$ and $\hat{S}$, is a controller of $S$. 
In the case where the state $x$ of $S$ is fully observed, we can determine the state $\hat{x}$ of $\hat{S}$ by the relation $R(\epsilon)$.
Then, the state $\hat{x}_C$ of $\hat{S}_C$ is determined by the relation $\hat{R}_C$.
The controller determines a control input $\hat{u}_C \in \hat{U}_C$ such that $\hat{r}_C(\hat{x}_C, \hat{u}_C) \neq \emptyset$.

However, in general, all states are not always measured.
Moreover, the output value is abstracted by the resolution of sensors.
Then, in the next section, we introduce a symbolic observer based on abstracted outputs.   

\section{Observer Design}
\label{Sec_observer}

Let $Y$ be a set of outputs, and $H: X \to Y$ be an output map of the plant $S=(X, X_0, U, r)$.
We call $(S, Y, H) = (X, X_0, U, r, Y, H)$ a plant with outputs.
The observer introduced in this section is based on the concept of observers for discrete event systems \cite{cassandras2009introduction}.
%In order for the observer to estimate the current state of the plant, we introduce an abstracted system $\check{S}=(\check{X}, \check{X}_0, \check{U}, \check{r})$ of the plant $S$, called an o-abstracted system, such that there exists an acSR based on the output of the plant. 
%Note that $\check{S} \neq \hat{S}$ in general. 
In order that the observer always estimates the current state of the plant, it must simulate any behavior of the plant. 
Thus, we introduce an abstracted system $\check{S}=(\check{X}, \check{X}_0, \check{U}, \check{r})$ of the plant $S$, called an o-abstracted system, such that there exists an acSR $\check{R}(\epsilon)$ from $S$ to $\check{S}$.
Note that $\check{S} \neq \hat{S}$ in general. 
The acSR $\check{R}(\epsilon)$ is based on measured outputs of the physical plant and satisfies the following condition:
\begin{align}
\forall x_1 \in X, \forall x_2 \in X, \forall \check{x} \in \check{X}: (x_1, \check{x}) \in \check{R}_X(\epsilon) \ \land \ (x_2, \check{x}) \in \check{R}_X(\epsilon) \Rightarrow H({x}_1) = H({x}_2).
\end{align}

\begin{define}
\label{def_observer}
Let $(S, Y, H)$ be a plant with outputs and $\check{S}$ be an o-abstracted system, let $\kappa', \lambda' \in \mathbb{R}_{\ge 0}$, $\beta' \in [0, 1[$ be some parameters, and consider a map $\check{\sf d}: X \times \check{X} \to \mathbb{R}_{\ge 0}$.
There exists a $(\kappa', \beta', \lambda')$-acSR $\check{R}(\epsilon) \subseteq X \times \check{X} \times U \times \check{U}$ from $S$ to $\check{S}$ with $\check{\sf d}$.  
Then, we define a system $\tilde{S}=(\tilde{X}, \tilde{X}_0, \tilde{U}, \tilde{r})$ where:

\begin{itemize}
%\begin{minipage}{0.5\hsize}
\item ${\tilde{X}} = 2^{\check{X}} \setminus \{ \emptyset \}$;
\item ${\tilde{X}_0} = 2^{\check{X}_0} \setminus \{ \emptyset \} \subseteq {\tilde{X}}$;
%\end{minipage}
%\begin{minipage}{0.5\hsize}
\item $\tilde{U} = \check{U}$;
%\item ${\tilde{Y}} = 2^{\check{Y}}$;
%\item ${\tilde{H}}({\tilde{x}}) = \cup_{\check{x} \in {\tilde{x}}}\check{H}(\check{x})$;
%\end{minipage}
\item $\tilde{r}: \tilde{X} \times \tilde{U} \to 2^{\tilde{X}}$ is defined as follows:
\begin{align*}
\tilde{r}(\tilde{x}, \check{u})= \{ \tilde{x}' \in \tilde{X} \ | \ &\forall \check{x}' \in \tilde{x}', \exists \check{x} \in \tilde{x}, \exists x \in X, \exists u \in U, \exists x' \in r(x, u), \exists \epsilon \in [\kappa', \infty[:\\
&\check{x}' \in \check{r}(\check{x}, \check{u}) \land (x, \check{x}, u, \check{u}) \in \check{R}(\epsilon) \land (x', \check{x}') \in \check{R}_X(\kappa' + \beta' \epsilon + \lambda' \check{\sf d}(u, \check{u})) \},
\end{align*}
where $\check{R}_X(\epsilon) \subseteq X \times \check{X}$ is the projection of a relation induced by $\check{R}(\epsilon)$ defined as follows:
\begin{align*}
\check{R}_X(\epsilon) = \{ (x, \check{x}) \in X \times \check{X} \ | \ \exists u \in U, \exists \check{u} \in \check{U}: (x, \check{x}, u, \check{u}) \in \check{R}(\epsilon) \}.
\end{align*}
\end{itemize}
%\begin{itemize}
%\item ${\tilde{X}} = 2^{\check{X}} \setminus \{ \emptyset \}$;
%\item ${\tilde{X}_0} = 2^{\check{X}_0} \setminus \{ \emptyset \} \subseteq {\tilde{X}}$;
%\item $\tilde{U} = \check{U}$;
%\item ${\tilde{Y}} = 2^{\check{Y}}$;
%\item ${\tilde{H}}({\tilde{x}}) = \cup_{\check{x} \in {\tilde{x}}}\check{H}(\check{x})$;
%\item $\tilde{r}: \tilde{X} \times \tilde{U} \to 2^{\tilde{X}}$ is defined as follows:
%\begin{align*}
%\tilde{r}(\tilde{x}, \check{u})= \{ \tilde{x}' \in \tilde{X} \ | \ &\forall \check{x}' \in \tilde{x}', \exists \check{x} \in \tilde{x}, \exists x \in X, \exists u \in U, \exists x' \in r(x, u), \exists \epsilon \in [\kappa', \infty[:\\
%&\check{x}' \in \check{r}(\check{x}, \check{u}) \land (x, \check{x}, u, \check{u}) \in \check{R}(\epsilon) \land (x', \check{x}') \in \check{R}_X(\kappa' + \beta' \epsilon + \lambda' \check{\sf d}(u, \check{u})) \}
%\end{align*}
%where $\check{R}_X(\epsilon) \subseteq X \times \check{X}$ is the projection of a relation induced by $\check{R}(\epsilon)$ defined as follows:
%\begin{align*}
%\check{R}_X(\epsilon) = \{ (x, \check{x}) \in X \times \check{X} \ | \ \exists u \in U, \exists \check{u} \in \check{U}: (x, \check{x}, u, \check{u}) \in \check{R}(\epsilon) \}.
%\end{align*}
%\end{itemize}
We call $\tilde{S}$ an observer of $S$ induced by $\check{S}$.
\end{define}

\begin{thm}
\label{thm_obs_acSR}
Let $(S, Y, H)$ be a plant with outputs, $\check{S}$ is an o-abstracted system, and there exists a $(\kappa', \beta', \lambda')$-acSR $\check{R}(\epsilon) \subseteq X \times \check{X} \times U \times \check{U}$ from $S$ to $\check{S}$ with $\check{\sf d}$ for some $\kappa', \lambda' \in \mathbb{R}_{\ge 0}$, $\beta' \in [0, 1[$, and a map $\check{\sf d}: X \times \check{X} \to \mathbb{R}_{\ge 0}$.
The observer $\tilde{S}$ is induced by $\check{S}$ as in Definition \ref{def_observer}.
Then, the following relation $R'(\epsilon) \subseteq X \times \tilde{X} \times U \times \tilde{U}$ is a $(\kappa', \beta', \lambda')$-acSR from $(S, Y, H)$ to $\tilde{S}$ with $\check{\sf d}$:
\begin{align}
\label{def_R'(epsilon)}
R'(\epsilon) = \{ (x, \tilde{x}, u, \check{u}) \ | \ \exists \check{x} \in \tilde{x}: (\check{x}, {x}, u, \check{u}) \in \check{R}(\epsilon) \}.
\end{align}
\end{thm}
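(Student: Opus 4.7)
To show that $R'(\epsilon)$ is a $(\kappa',\beta',\lambda')$-acSR from $(S,Y,H)$ to $\tilde S$, I must verify three things: monotonicity $R'(\epsilon)\subseteq R'(\epsilon')$ whenever $\epsilon\le\epsilon'$, the initial-state clause, and the single-step simulation clause. Throughout the argument, I plan to push every requirement about $R'$ and $\tilde r$ back onto the corresponding property of the hypothesised acSR $\check R(\epsilon)$ from $S$ to $\check S$, so the proof is essentially a transfer lemma.

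\textbf{Monotonicity and initial states.} Monotonicity is immediate: if $(x,\tilde x,u,\check u)\in R'(\epsilon)$ with witness $\check x\in\tilde x$ such that $(x,\check x,u,\check u)\in\check R(\epsilon)$, then because $\check R$ itself is monotone the same witness $\check x$ also shows $(x,\tilde x,u,\check u)\in R'(\epsilon')$. For the initial condition, given $x_0\in X_0$, the acSR property of $\check R$ provides $\check x_0\in\check X_0$ with $(x_0,\check x_0)\in\check R_X(\kappa')$. I take $\tilde x_0:=\{\check x_0\}$, which lies in $\tilde X_0=2^{\check X_0}\setminus\{\emptyset\}$, and unfold $\check R_X(\kappa')$ to recover inputs $u,\check u$ witnessing $(x_0,\tilde x_0,u,\check u)\in R'(\kappa')$, hence $(x_0,\tilde x_0)\in R'_X(\kappa')$.

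\textbf{Transition clause.} Fix $(x,\tilde x)\in R'_X(\epsilon)$ and $u\in U(x)$. Unfolding the definition of $R'_X$ gives some $\check x\in\tilde x$ with $(x,\check x)\in\check R_X(\epsilon)$. Invoking condition (2) of the acSR $\check R$ at $(x,\check x)$ with input $u$, I obtain $\check u\in\check U(\check x)$ such that $(x,\check x,u,\check u)\in\check R(\epsilon)$ and, for every $x'\in r(x,u)$, some $\check x'\in\check r(\check x,\check u)$ with $(x',\check x')\in\check R_X(\kappa'+\beta'\epsilon+\lambda'\check{\sf d}(u,\check u))$. The pair of the original $(x,\tilde x,u,\check u)$ lies in $R'(\epsilon)$ because $\check x\in\tilde x$. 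For the successor I take $\tilde x':=\{\check x'\}$; to confirm $\tilde x'\in\tilde r(\tilde x,\check u)$ I just feed the witnesses $\check x\in\tilde x$, $x$, $u$, $x'\in r(x,u)$ and the very same $\epsilon$ into the definition of $\tilde r$, each of the three conjuncts having just been produced. Finally, the promoted accuracy $(x',\tilde x')\in R'_X(\kappa'+\beta'\epsilon+\lambda'\check{\sf d}(u,\check u))$ follows by picking any $u'',\check u''$ that witness the $\check R_X$ containment of $(x',\check x')$ and observing $\check x'\in\tilde x'$.

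\textbf{Main obstacle.} The only delicate part is navigating the nested quantifier structure of $\tilde r(\tilde x,\check u)$, which insists that \emph{every} $\check x'\in\tilde x'$ admit its own justification. By electing $\tilde x'$ to be a singleton rather than a maximal consistent set of successors, I avoid having to synthesise a globally coherent observer update and reduce the verification to a one-step bookkeeping exercise that directly repackages the data produced by the acSR property of $\check R$.
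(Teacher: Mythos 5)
Your proof is correct: every clause of the acSR definition is verified, and the singleton successor $\tilde x'=\{\check x'\}$ is a legitimate element of $\tilde r(\tilde x,\check u)$ because the universal quantifier over $\check x'\in\tilde x'$ in Definition \ref{def_observer} is discharged trivially for a one-element set. The overall strategy --- pushing each clause of the claim back onto the corresponding clause of $\check R(\epsilon)$ --- is exactly the paper's. The one genuine difference is your choice of witness states of $\tilde S$. The paper does not take singletons: it takes $\tilde x_0$ (resp.\ $\tilde x'$) to be the set of \emph{all} abstract states $\check x_0\in\check X_0$ (resp.\ all $\check x'\in\check r(\check x,\check u)$ for some $\check x\in\tilde x$) that are $\check R_X$-related to some plant state producing the actually measured output $y_0=H(x_0)$ (resp.\ $y'=H(x')$). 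That larger, output-filtered witness is the state the implemented observer really reaches after reading the sensor value, so the paper's proof simultaneously establishes the theorem and justifies the interpretation stated right after it, namely that $\tilde S$ ``lists up all candidates'' of the current state. Your singleton witness proves the purely existential statement more economically, but it presupposes knowledge of the true successor $x'$, so it does not correspond to any realizable observer update; if one wants the acSR to hold along the trajectory the observer actually follows (which is how the result is used in Theorems \ref{result_bf_R_C(epsilon)} and \ref{result_main}), the output-consistent witness is the one that matters. Both arguments are valid proofs of the theorem as literally stated.
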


\begin{proof}
We will show that ${R}'(\epsilon)$ satisfies the conditions of a $(\kappa', \beta', \lambda')$-acSR from $S$ to $\tilde{S}$ with $\check{\sf d}$.
%For notational convenience,  we define the projection ${R}'_X(\epsilon)$ of a relation induced by ${R}'(\epsilon)$ as follows: %denoted by ${R}'_X(\epsilon) \subseteq X \times \tilde{X}$:
%\begin{align}
%\label{def_R'(epsilon)}
%{R}'_X(\epsilon) = \{ (x, \tilde{x}) \in X \times \tilde{X} \ | \ \exists u \in U, \exists \check{u} \in \tilde{U}: (x, \tilde{x}, u, \check{u}) \in {R}'(\epsilon) \}.
%\end{align}
\begin{enumerate}
\item Consider any ${x}_{0} \in {X}_{0}$.
Let $H(x_0)=y_0$. 
We consider the following state $\tilde{x}_0$ of $\tilde{S}$:
\begin{align*}
{\tilde{x}_0} = \{ \check{x}_0 \in \check{X}_0 \ | \ \exists {x}_{p0} \in {X}_{0}: H(x_{p0}) = y_0 \ \land \  ({x}_{p0}, \check{x}_0) \in \check{R}_{X}(\kappa') \}.
\end{align*}
Note that by the $(\kappa', \beta', \lambda')$-acSR $\check{R}(\epsilon)$, ${\tilde{x}_0}$ is a non-empty set. 
Recall that ${\tilde{X}_0}=2^{\check{X}_0} \setminus \{ \emptyset \}$.
Then, we have ${\tilde{x}_0} \in { \tilde{X}_0}$. 
By the definition of ${R}'(\epsilon)$, $({x}_{0}, \tilde{x}_0) \in {R}'_X(\kappa')$ holds. 
\item First, consider any $({x}, \tilde{x}) \in R'_X(\epsilon)$.
We have
\begin{align*}
\exists \check{x} \in \tilde{x}: (x, \check{x}) \in \check{R}_X(\epsilon).
\end{align*}
Choose any $u \in U(x)$. 
By the $(\kappa', \beta', \lambda')$-acSR $\check{R}(\epsilon)$, there exists $\check{u} \in \tilde{U}(\tilde{x})$ such that $(x, \check{x}, u, \check{u}) \in \check{R}(\epsilon)$. Now, we have $({x}, \tilde{x}, u, \check{u}) \in {R}'(\epsilon)$.

Next, consider any ${x}' \in {r}(x, u)$.
Let $H(x') = y'$.
We consider the following state $\tilde{x}'$ of $\tilde{S}$:
\begin{align*}
\tilde{x}' = \{ \check{x}' \in \check{X} \ | \ &\exists \check{x} \in \tilde{x}, \exists x_p \in X, \exists x'_p \in r(x_p, u):\\
&\check{x}' \in \check{r}(\check{x}, \check{u}) \land H(x'_p)=y' \land (x_p, \check{x}, u, \check{u}) \in \check{R}(\epsilon) \land (x'_p, \check{x}') \in \check{R}_X(\kappa' + \beta' \epsilon + \lambda' \check{\sf d}(u, \check{u})) \}.
\end{align*}
Note that by the $(\kappa', \beta', \lambda')$-acSR $\check{R}(\epsilon)$, $\tilde{x}'$ is a non-empty set. 
Moreover, we have
\begin{align*}
\exists \check{x}' \in \tilde{x}': (x', \check{x}') \in \check{R}_X(\kappa'+\beta'\epsilon+\lambda' \check{\sf d}(u, \check{u})).
\end{align*}
By the definition of $\tilde{r}$, $\tilde{x}' \in \tilde{r}(\tilde{x}, \check{u})$ holds.
Thus, by the definition of ${R}'(\epsilon)$, $(x', \tilde{x}') \in {{R}'_{X}}(\kappa'+\beta'\epsilon+\lambda' \check{\sf d}(u, \check{u}))$ holds. \qed
\end{enumerate}
\end{proof}

Theorem \ref{thm_obs_acSR} implies that for the current state $\tilde{x}$ of the observer, there always exists $\check{x} \in \tilde{x}$ such that $\check{x}$ is an o-abstracted state of the current state $x$ of the plant.
Thus, $\tilde{S}$ lists up all candidates of $\check{x}$.

\section{Output Feedback Control System}
\label{Sec_output_feedback_control_system}

%We introduced a DES observer that lists up all possible states of the plant.
In this section, we construct an output feedback system with the observer defined in the previous section.
%First of all, we introduce two transition systems.

\begin{define}
\label{def_bf_Shat}
We define a system ${\bf \hat{S}}=({\bf \hat{X}}, {\bf \hat{X}_0}, {\bf \hat{U}}, {\bf \hat{r}})$ induced by $\hat{S}=(\hat{X}, \hat{X}_0, \hat{U}, \hat{r})$ where:
\begin{itemize}
%\begin{minipage}{0.5\hsize}
\item ${\bf \hat{X}} = 2^{\hat{X}} \setminus \{ \emptyset \}$;
\item ${\bf \hat{X}_0} = 2^{\hat{X}_0} \setminus \{ \emptyset \} \subseteq {\bf \hat{X}}$;
%\end{minipage}
%\begin{minipage}{0.5\hsize}
\item ${\bf \hat{U}} = \hat{U}$;
%\item ${\bf \hat{Y}} = 2^{\hat{Y}}$;
%\item ${\bf \hat{H}}({\bf \hat{x}}) = \cup_{\hat{x} \in {\bf \hat{x}}}\hat{H}(\hat{x})$;
%\end{minipage}
\item ${\bf \hat{r}}: {\bf \hat{X}} \times {\bf \hat{U}} \to 2^{\bf \hat{X}}$ is defined as follows:
\begin{align*}
{\bf \hat{r}}({\bf \hat{x}}, \hat{u}) = \begin{cases}
2^{\bigcup_{\hat{x} \in {\bf \hat{x}}}{\hat{r}(\hat{x}, \hat{u})}} \setminus \{ \emptyset \} &\mbox{if \ } \forall \hat{x} \in {\bf \hat{x}}: \hat{r}(\hat{x}, \hat{u}) \neq \emptyset, \\
\emptyset &\mbox{otherwise}.
\end{cases}
\end{align*}
\end{itemize}
%\begin{itemize}
%\item ${\bf \hat{X}} = 2^{\hat{X}} \setminus \{ \emptyset \}$;
%\item ${\bf \hat{X}_0} = 2^{\hat{X}_0} \setminus \{ \emptyset \} \subseteq {\bf \hat{X}}$;
%\item ${\bf \hat{U}} = \hat{U}$;
%\item ${\bf \hat{Y}} = 2^{\hat{Y}}$;
%\item ${\bf \hat{H}}({\bf \hat{x}}) = \cup_{\hat{x} \in {\bf \hat{x}}}\hat{H}(\hat{x})$;
%\item ${\bf \hat{r}}: {\bf \hat{X}} \times {\bf U} \to 2^{\bf \hat{X}}$ is defined as follows:
%\begin{align*}
%{\bf \hat{r}}({\bf \hat{x}}, \hat{u}) = \begin{cases}
%2^{\bigcup_{\hat{x} \in {\bf \hat{x}}}{\hat{r}(\hat{x}, \hat{u})}} \setminus \{ \emptyset \} &(\mbox{if \ } \forall \hat{x} \in {\bf \hat{x}}: \hat{r}(\hat{x}, u) \neq \emptyset), \\
%\emptyset &(\mbox{otherwise}).
%\end{cases}
%\end{align*}
%\end{itemize}
\end{define}

%By the definition of $\bf \hat{X}$, any $\bf \hat{x} \in \hat{X}$ satisfies ${\bf \hat{x}} \subseteq \hat{X}$.
%$\tilde{S}$ defined in Definition \ref{def_observer} lists up several candidates of the plant.
%We do not know which of them is the actual current state of the plant, then we consider all of them to determine control inputs.
Intuitively, each $\hat{x} \in {\bf \hat{x}}$ corresponds to (at least) one candidate listed up by the observer.
Recall that $\tilde{S}$ is induced by $\check{S}$ that is different from $\hat{S}$.
Then, we consider ${\bf \hat{x}}$ such that each c-abstracted state $\hat{x} \in {\bf \hat{x}}$ corresponds to (at least) one o-abstracted state $\check{x} \in \tilde{x}$.  
The transition map $\bf \hat{r}$ is defined only when $\hat{r}$ is defined for all $\hat{x} \in {\bf \hat{x}}$.
%It is noticed that there may exist ${\bf \hat{x}'} \in {\bf \hat{r}}({\bf \hat{x}}, \hat{u})$ such that there exists $\hat{x} \in {\bf \hat{x}}$ such that for all $\hat{x}' \in \hat{r}(\hat{x}, u): \hat{x}' \notin {\bf \hat{x}'}$.
%If $\bf \hat{S}$ transits to such $\bf \hat{x}'$, $\hat{x}$ was a candidate at the previous time step, but it is no longer a candidate now.  

We define $\bf \hat{S}_C$ induced by $\hat{S}_C$ as well as $\bf \hat{S}$ induced by $\hat{S}$.

\begin{define}
\label{def_bf_S_Chat}
We define a system ${\bf \hat{S}_C}=({\bf \hat{X}_C}, {\bf \hat{X}_{C0}}, {\bf \hat{U}_C}, {\bf \hat{r}_C})$ induced by $\hat{S}_C=(\hat{X}_C, \hat{X}_{C0}, \hat{U}_C, \hat{r}_C)$ where:
\begin{itemize}
%\begin{minipage}{0.5\hsize}
\item ${\bf \hat{X}_C} = 2^{\hat{X}_C} \setminus \{ \emptyset \}$;
\item ${\bf \hat{X}_{C0}} = 2^{\hat{X}_{C0}} \setminus \{ \emptyset \} \subseteq {\bf \hat{X}_C}$;
%\end{minipage}
%\begin{minipage}{0.5\hsize}
\item ${\bf \hat{U}_C} = \hat{U}_C$;
%\item ${\bf \hat{Y}_C} = 2^{\hat{Y}_C}$;
%\item ${\bf \hat{H}_C}({\bf \hat{x}_C}) = \cup_{\hat{x}_C \in {\bf \hat{x}_C}}\hat{H}_C(\hat{x}_C)$;
%\end{minipage}
\item ${\bf \hat{r}_C}: {\bf \hat{X}_C} \times {\bf \hat{U}_C} \to 2^{\bf \hat{X}_C}$ is defined as follows:
\begin{align*}
{\bf \hat{r}_C}({\bf \hat{x}_C}, \hat{u}_C) = \begin{cases}
2^{\bigcup_{\hat{x}_C \in {\bf \hat{x}_C}}{\hat{r}_C(\hat{x}_C, \hat{u}_C)}} \setminus \{ \emptyset \} &\mbox{if \ } \forall \hat{x}_C \in {\bf \hat{x}_C}: \hat{r}_C(\hat{x}_C, \hat{u}_C) \neq \emptyset, \\
\emptyset &\mbox{otherwise}.
\end{cases}
\end{align*}
\end{itemize}
%\begin{itemize}
%\item ${\bf \hat{X}_C} = 2^{\hat{X}_C} \setminus \{ \emptyset \}$;
%\item ${\bf \hat{X}_{C0}} = 2^{\hat{X}_{C0}} \setminus \{ \emptyset \} \subseteq {\bf \hat{X}_C}$;
%\item ${\bf \hat{U}_C} = \hat{U}_C$;
%\item ${\bf \hat{Y}_C} = 2^{\hat{Y}_C}$;
%\item ${\bf \hat{H}_C}({\bf \hat{x}_C}) = \cup_{\hat{x}_C \in {\bf \hat{x}_C}}\hat{H}_C(\hat{x}_C)$;
%\item ${\bf \hat{r}_C}: {\bf \hat{X}_C} \times {\bf U_C} \to 2^{\bf \hat{X}_C}$ is defined as follows:
%\begin{align*}
%{\bf \hat{r}_C}({\bf \hat{x}_C}, \hat{u}_C) = \begin{cases}
%2^{\bigcup_{\hat{x}_C \in {\bf \hat{x}_C}}{\hat{r}_C(\hat{x}_C, \hat{u}_C)}} \setminus \{ \emptyset \} &(\mbox{if \ } \forall \hat{x}_C \in {\bf \hat{x}_C}: \hat{r}_C(\hat{x}_C, u_C) \neq \emptyset), \\
%\emptyset &(\mbox{otherwise}).
%\end{cases}
%\end{align*}
%\end{itemize}
\end{define}

Then, we have the following main theorems.

\begin{thm}
\label{result_bf_R_C(epsilon)}
Consider the same condition as Theorem \ref{tabuada}.
In addition, let $\check{S}=(\check{X}, \check{X}_0, \check{U}, \check{r})$ be an o-abstracted system, and there exists a $(\kappa', \beta', \lambda')$-acSR $\check{R}(\epsilon) \subseteq X \times \check{X} \times U \times \check{U}$ from $S$ to $\check{S}$ with $\check{\sf d}$ for some $\kappa', \lambda' \in \mathbb{R}_{\ge 0}$, $\beta' \in [0, 1[$, and a map $\check{\sf d}: X \times \check{X} \to \mathbb{R}_{\ge 0}$.
Let $\tilde{S}=(\tilde{X}, \tilde{X}_0, \tilde{U}, \tilde{r})$ be an observer induced by $\check{S}$ as in Definition \ref{def_observer}.
Let ${\bf \hat{S}}=({\bf \hat{X}}, {\bf \hat{X}_0}, {\bf \hat{U}}, {\bf \hat{r}})$ and ${\bf \hat{S}_C}=({\bf \hat{X}_C}, {\bf \hat{X}_{C0}}, {\bf \hat{U}_C}, {\bf \hat{r}_C})$ be systems induced by a c-abstracted system $\hat{S}$ and $\hat{S}_C$, respectively as in Definitions \ref{def_bf_Shat} and \ref{def_bf_S_Chat}.
Assume that $\hat{R}_C$ satisfies (\ref{restrictive_R_Chat}), that there exists a map $\bar{\sf d}: \hat{U} \times \check{U} \to \mathbb{R}_{\ge 0}$ satisfying (\ref{assum_map}), and that $R(\epsilon)$ and $\check{R}(\epsilon)$ satisfy (\ref{assum_R(epsilon)}) and (\ref{assum_Rcheck(epsilon)}), respectively.
\begin{align}
\label{restrictive_R_Chat}
&\forall \hat{u}_C \in \hat{U}_C, \exists \hat{u} \in \hat{U}, \forall (\hat{x}_C, \hat{x}) \in \hat{R}_{CX}: \ \hat{u}_C \in \hat{U}_C(\hat{x}_C) \Rightarrow (\hat{x}_C, \hat{x}, \hat{u}_C, \hat{u}) \in \hat{R}_C,\\
\label{assum_map}
&\forall \hat{u} \in \hat{U}, \forall u \in U, \forall \check{u} \in \check{U}: \ \bar{\sf d}(\hat{u}, \check{u}) \ge {\sf d}(\hat{u}, u) + \check{\sf d}(u, \check{u}),\\
\label{assum_R(epsilon)}
&\forall \epsilon \in [\kappa, \infty[, \forall \hat{u} \in \hat{U}, \exists u \in U, \forall (\hat{x}, x) \in R_X(\epsilon): \ \hat{u} \in \hat{U}(\hat{x}) \Rightarrow (\hat{x}, x, \hat{u}, u) \in R(\epsilon),\\
\label{assum_Rcheck(epsilon)}
&\forall \epsilon \in [\kappa, \infty[, \forall {u} \in {U}, \exists \check{u} \in \check{U}, \forall (x, \check{x}) \in \check{R}_X(\epsilon): \ u \in U(x) \Rightarrow ({x}, \check{x}, {u}, \check{u}) \in \check{R}(\epsilon). 
\end{align}

\begin{figure}
\centering
\includegraphics[width=10 cm]{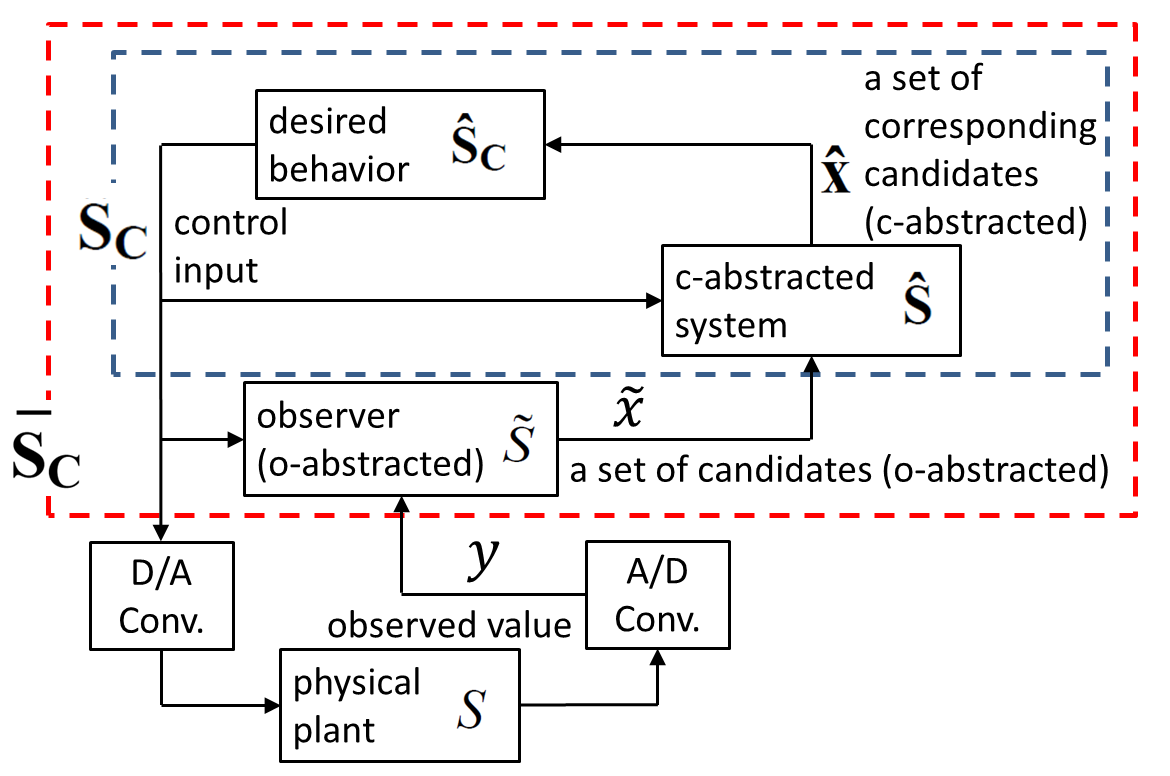}

\caption{The proposed output feedback controller ${\bf \bar{S}_C} = {\bf S_C} \times_{{\bf R_C}(\epsilon)} \tilde{S}$ consisting of $\bf S_C = \hat{S}_C \times_{\hat{R}_C} \hat{S}$ and the observer $\tilde{S}$.}
\label{fig_structure}
\end{figure}

Let $\bf S_C := \hat{S}_C \times_{\hat{R}_C} \hat{S}$.
Then, the following relation ${\bf R_C}(\epsilon) \subseteq ({\bf \hat{X}_C} \times {\bf \hat{X}}) \times \tilde{X} \times ({\bf \hat{U}_C} \times {\bf \hat{U}}) \times \tilde{U}$ is a $(\kappa + \kappa', \max\{\beta,\beta'\}, \max\{\lambda, \lambda'\})$-acASR from $\bf S_C$ to $\tilde{S}$ with ${\sf d_C}((\hat{u}_C, \hat{u}), \check{u}) = \bar{\sf d}(\hat{u}, \check{u})$.
\begin{align}
\label{def_bf_RC(epsilon)}
{\bf R_C}(\epsilon)= \{ (({\bf \hat{x}_C}, {\bf \hat{x}}), \tilde{x}, (\hat{u}_C, \hat{u}), \check{u}) \ | \ ({\bf \hat{x}}, \tilde{x}, \hat{u}, \check{u}) \in {\bf R}(\epsilon) \land ({\bf \hat{x}_C}, {\bf \hat{x}}) \in {\bf \hat{R}}_{{\bf C}X} \},
\end{align}
where the relations ${\bf \hat{R}_C} \subseteq {\bf \hat{X}_C} \times {\bf \hat{X}} \times {\bf \hat{U}_C} \times {\bf \hat{U}}$ and ${\bf R}(\epsilon) \subseteq {\bf \hat{X}} \times \tilde{X} \times {\bf \hat{U}} \times \tilde{U}$ are defined by (\ref{def_bf_R_Chat}) and (\ref{def_bf_R(epsilon)}), respectively.
\begin{align}
\label{def_bf_R_Chat}
&{\bf \hat{R}_C} = \{ ({\bf \hat{x}_C}, {\bf \hat{x}}, \hat{u}_C, \hat{u}) \ | \ \forall \hat{x} \in {\bf \hat{x}}, \exists \hat{x}_C \in {\bf \hat{x}_C}: (\hat{x}_C, \hat{x}, \hat{u}_C, \hat{u}) \in \hat{R}_C \},\\
%\label{def_bf_R_CXhat}
%&{\bf \hat{R}}_{{\bf C}X} = \{ ({\bf \hat{x}_C}, {\bf \hat{x}}) \ | \ \exists \hat{u}_C \in {\bf \hat{U}_C}, \exists \hat{u} \in {\bf \hat{U}}: ({\bf \hat{x}_C}, {\bf \hat{x}}, \hat{u}_C, \hat{u}) \in {\bf \hat{R}_C} \},\\
&{\bf R}(\epsilon) = \{ ({\bf \hat{x}}, \tilde{x}, \hat{u}, \check{u}) \ | \ \exists u \in U, \exists \epsilon_1 \ge \kappa, \exists \epsilon_2 \ge \kappa', \forall \check{x} \in {\tilde{x}}, \exists x \in X, \exists \hat{x} \in {\bf \hat{x}}: \nonumber \\
&\ \ \ \ \ \ \ \ \ \ \ \ \ \ \ \ \ \ \ \ \ \ \ \ \ \ \ \ \ \ \ \ \epsilon_1 + \epsilon_2 = \epsilon \ \land \ (\hat{x}, x, \hat{u}, u) \in R(\epsilon_1) \ \land \ (x, \check{x}, u, \check{u}) \in \check{R}(\epsilon_2) \}. \label{def_bf_R(epsilon)}
\end{align}
\end{thm}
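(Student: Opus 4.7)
The plan is to verify the two defining conditions of a $(\kappa+\kappa',\max\{\beta,\beta'\},\max\{\lambda,\lambda'\})$-acASR for ${\bf R_C}(\epsilon)$, mirroring the compositional argument of Theorem~\ref{tabuada} but with two additional complications: the powerset lifts ${\bf \hat{S}}$ and ${\bf \hat{S}_C}$ on the source side and the observer $\tilde{S}$ on the target side, together with the fact that ${\bf R}(\epsilon)$ is defined through an intermediate plant state $x\in X$ whose existence must be tracked carefully. Throughout I would parse the abstraction error as $\epsilon=\epsilon_1+\epsilon_2$ with $\epsilon_1\ge\kappa$, $\epsilon_2\ge\kappa'$, reflecting the split between the $\hat{S}$-to-$S$ leg via $R$ and the $S$-to-$\check{S}$ leg via $\check{R}$.

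First I would dispose of the initial-state condition. Given an initial pair $({\bf \hat{x}_{C0}},{\bf \hat{x}_0})$ related by ${\bf \hat{R}_C}$, I take $\tilde{x}_0\in\tilde{X}_0$ to be the set of all $\check{x}_0\in\check{X}_0$ that correspond via $\check{R}_X(\kappa')$ to some $x_0\in X_0$ whose c-abstraction (via $R_X(\kappa)$) lies in ${\bf \hat{x}_0}$. Non-emptiness follows from the initial-state conditions of $R(\epsilon)$ and $\check{R}(\epsilon)$ applied to any representative of ${\bf \hat{x}_0}$, and the definitions of ${\bf R}$ and ${\bf R_C}$ then yield $(({\bf \hat{x}_{C0}},{\bf \hat{x}_0}),\tilde{x}_0)$ in the projection of ${\bf R_C}(\kappa+\kappa')$.

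For the alternating transition condition, fix a related pair $(({\bf \hat{x}_C},{\bf \hat{x}}),\tilde{x})$ and an input $(\hat{u}_C,\hat{u})$ defined at $({\bf \hat{x}_C},{\bf \hat{x}})$, with $\hat{u}$ chosen from $\hat{u}_C$ via (\ref{restrictive_R_Chat}). I would then use (\ref{assum_R(epsilon)}) to pick a single $u\in U$ that acts uniformly across all witnesses $(\hat{x},x)\in R_X(\epsilon_1)$ with $\hat{u}\in\hat{U}(\hat{x})$, and (\ref{assum_Rcheck(epsilon)}) to pick a single $\check{u}\in\check{U}$ that works uniformly across all witnesses $(x,\check{x})\in\check{R}_X(\epsilon_2)$ with $u\in U(x)$. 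Because the extractions are uniform in the witness, the chosen $\check{u}$ yields $(({\bf \hat{x}_C},{\bf \hat{x}}),\tilde{x},(\hat{u}_C,\hat{u}),\check{u})\in{\bf R_C}(\epsilon)$ directly from the definition in (\ref{def_bf_R(epsilon)}).

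The main obstacle is the successor clause. Given any $\tilde{x}'\in\tilde{r}(\tilde{x},\check{u})$, I must build ${\bf \hat{x}}'\in{\bf \hat{r}}({\bf \hat{x}},\hat{u})$ and ${\bf \hat{x}_C}'\in{\bf \hat{r}_C}({\bf \hat{x}_C},\hat{u}_C)$ with the new pair related at the prescribed level. For each $\check{x}'\in\tilde{x}'$, Definition~\ref{def_observer} furnishes a witness chain $\check{x}\in\tilde{x}$, $x\in X$, $x'\in r(x,u)$ with $(x,\check{x},u,\check{u})\in\check{R}(\epsilon_2)$ and $(x',\check{x}')\in\check{R}_X(\kappa'+\beta'\epsilon_2+\lambda'\check{\sf d}(u,\check{u}))$; invoking the acASR property of $R(\epsilon_1)$ at the corresponding $\hat{x}\in{\bf \hat{x}}$ then yields $\hat{x}'\in\hat{r}(\hat{x},\hat{u})$ with $(\hat{x}',x')\in R_X(\kappa+\beta\epsilon_1+\lambda{\sf d}(\hat{u},u))$. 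Collecting these $\hat{x}'$ over all $\check{x}'\in\tilde{x}'$ defines ${\bf \hat{x}}'$, which is non-empty and lies in ${\bf \hat{r}}({\bf \hat{x}},\hat{u})$ provided $\hat{r}(\hat{x},\hat{u})\neq\emptyset$ for every $\hat{x}\in{\bf \hat{x}}$ --- a property I would secure from (\ref{assum_R(epsilon)}) applied at the fixed $\hat{u}$. The total error then bounds as $\epsilon_1'+\epsilon_2'\le(\kappa+\beta\epsilon_1+\lambda{\sf d}(\hat{u},u))+(\kappa'+\beta'\epsilon_2+\lambda'\check{\sf d}(u,\check{u}))$, which is dominated by $\kappa+\kappa'+\max\{\beta,\beta'\}\epsilon+\max\{\lambda,\lambda'\}\bar{\sf d}(\hat{u},\check{u})$ after applying (\ref{assum_map}) to absorb $\lambda{\sf d}(\hat{u},u)+\lambda'\check{\sf d}(u,\check{u})$. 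Finally, ${\bf \hat{x}_C}'$ is built analogously from $\hat{r}_C(\hat{x}_C,\hat{u}_C)$ using the ASR $\hat{R}_C$ together with (\ref{restrictive_R_Chat}); the elementwise form of ${\bf \hat{R}_C}$ in (\ref{def_bf_R_Chat}) guarantees $({\bf \hat{x}_C}',{\bf \hat{x}}')\in{\bf \hat{R}}_{{\bf C}X}$, closing the induction and the proof.
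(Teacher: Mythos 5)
Your argument is correct and contains all the constructions the paper actually uses, but it is organized differently. The paper does not verify the acASR conditions for ${\bf R_C}(\epsilon)$ directly: it factors the claim into Lemma \ref{lem_bf hat{R}_C} (the lifted relation ${\bf \hat{R}_C}$ of (\ref{def_bf_R_Chat}) is an exact ASR from ${\bf \hat{S}_C}$ to ${\bf \hat{S}}$) and Lemma \ref{lem_bf_R} (the relation ${\bf R}(\epsilon)$ of (\ref{def_bf_R(epsilon)}) is a $(\kappa+\kappa',\max\{\beta,\beta'\},\max\{\lambda,\lambda'\})$-acASR from ${\bf \hat{S}}$ to $\tilde{S}$ with $\bar{\sf d}$), and then simply re-applies Theorem \ref{tabuada} to the triple $({\bf \hat{S}_C},{\bf \hat{S}},\tilde{S})$, since (\ref{def_bf_RC(epsilon)}) has exactly the shape of (\ref{eq_R_C(epsilon)}). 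Your monolithic verification inlines both lemmas: your initial-state set $\tilde{x}_0$, the uniform choice of $u$ and $\check{u}$ via (\ref{assum_R(epsilon)}) and (\ref{assum_Rcheck(epsilon)}), the construction of ${\bf \hat{x}}'$ by collecting the $R$-witnesses over all $\check{x}'\in\tilde{x}'$, the error bookkeeping $\kappa+\beta\epsilon_1+\lambda{\sf d}(\hat{u},u)+\kappa'+\beta'\epsilon_2+\lambda'\check{\sf d}(u,\check{u})\le\kappa+\kappa'+\max\{\beta,\beta'\}\epsilon+\max\{\lambda,\lambda'\}\bar{\sf d}(\hat{u},\check{u})$ via (\ref{assum_map}), and the elementwise use of $\hat{R}_C$ to produce ${\bf \hat{x}_C}'$ are precisely the steps inside the two lemmas. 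What the paper's factorization buys is reuse: Theorem \ref{tabuada} is applied twice in the paper (once for state feedback, once here), and Lemma \ref{lem_bf_R} is invoked again in the proof of Theorem \ref{result_main}; your version proves the same facts but leaves them entangled. Two small imprecisions to tidy up: in the alternating game the input $(\hat{u}_C,\hat{u})$ is universally quantified over ${\bf U_C}(({\bf \hat{x}_C},{\bf \hat{x}}))$ rather than ``chosen from $\hat{u}_C$ via (\ref{restrictive_R_Chat})'' (the membership $({\bf \hat{x}_C},{\bf \hat{x}},\hat{u}_C,\hat{u})\in{\bf \hat{R}_C}$ is already forced by the definition of the composition), and the nonemptiness of $\hat{r}(\hat{x},\hat{u})$ for every $\hat{x}\in{\bf \hat{x}}$ follows from $\hat{u}\in{\bf \hat{U}}({\bf \hat{x}})$ by the definition of ${\bf \hat{r}}$ in Definition \ref{def_bf_Shat}, not from (\ref{assum_R(epsilon)}). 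Neither affects the validity of the argument.
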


\begin{thm}
\label{result_main}
Consider the same condition as Theorem \ref{result_bf_R_C(epsilon)}.
Let ${\bf \bar{S}_C} := {\bf S_C} \times_{{\bf R_C}(\epsilon)} \tilde{S} = (\bar{\bf X}_{\bf C}, \bar{\bf X}_{\bf C0}, \bar{\bf U}_{\bf C}, \bar{\bf r}_{\bf C})$.
The following relation ${\bf \bar{R}_C}(\epsilon) \subseteq (({\bf \hat{X}_C} \times {\bf \hat{X}}) \times \tilde{X}) \times X \times (({\bf \hat{U}_C} \times {\bf \hat{U}}) \times \tilde{U}) \times U$ is a $(\kappa + \kappa', \max\{ \beta, \beta'\}, \max\{ \lambda, \lambda' \\ \})$-acASR from $\bf \bar{S}_C$ to $({S}, Y, H)$ with $\bar{\sf d}_{\sf C}(((\hat{u}_C, \hat{u}), \check{u}), u) = \bar{\sf d}(\hat{u}, \check{u})$:
\begin{align}
{\bf \bar{R}_C}(\epsilon) = \{((({\bf \hat{x}_C}, {\bf \hat{x}}), \tilde{x}), x, ((\hat{u}_C, \hat{u}), \check{u}), u) \ | \ &\exists \epsilon_1 \ge \kappa, \exists \epsilon_2 \ge \kappa', \exists \hat{x} \in {\bf \hat{x}}: \nonumber \\
&\epsilon_1 + \epsilon_2 = \epsilon \land (\hat{x}, x, \hat{u}, u) \in R(\epsilon_1) \land (x, \tilde{x}, u, \check{u}) \in {R}'(\epsilon_2) \},
\label{def_bf_RCbar(epsilon)}
\end{align}
\end{thm}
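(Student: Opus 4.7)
The plan is to verify the two conditions of Definition~\ref{def_acASR} directly for ${\bf \bar{R}_C}(\epsilon)$, chaining the acASR $R(\epsilon)$ from $\hat{S}$ to $S$ with the acSR $R'(\epsilon)$ from $(S,Y,H)$ to $\tilde{S}$ given by Theorem~\ref{thm_obs_acSR}, and relying on hypotheses (\ref{assum_map}), (\ref{assum_R(epsilon)}), (\ref{assum_Rcheck(epsilon)}) to align the inputs $\hat{u}$, $u$, $\check{u}$. Observe that (\ref{def_bf_RCbar(epsilon)}) has been deliberately split as a sum $\epsilon_1 + \epsilon_2 = \epsilon$ in order to separate the controller–plant linkage $R(\epsilon_1)$ from the plant–observer linkage $R'(\epsilon_2)$, so the verification amounts to propagating these two linkages through one transition step.

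For condition~1, I fix an arbitrary initial state $(({\bf \hat{x}_C}, {\bf \hat{x}}), \tilde{x}) \in \bar{\bf X}_{\bf C0}$. By Definition~\ref{def_composed_system} this lies in ${\bf R_C}_X(\kappa+\kappa')$, so unfolding (\ref{def_bf_RC(epsilon)}) and (\ref{def_bf_R(epsilon)}) yields witnesses $\hat{u},\check{u},u_*$ and, for each $\check{x}\in\tilde{x}$, elements $\hat{x}\in{\bf \hat{x}}\subseteq\hat{X}_0$ and $x_*\in X$ with $(\hat{x},x_*,\hat{u},u_*)\in R(\kappa)$ and $(x_*,\check{x},u_*,\check{u})\in\check{R}(\kappa')$. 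Condition~1 of the acASR $R$ applied to $\hat{x}\in\hat{X}_0$ provides some $x_0\in X_0$ with $(\hat{x},x_0)\in R_X(\kappa)$; then (\ref{assum_R(epsilon)}) supplies a $u\in U$ with $(\hat{x},x_0,\hat{u},u)\in R(\kappa)$, and (\ref{assum_Rcheck(epsilon)}) supplies a $\check{u}$ with $(x_0,\check{x},u,\check{u})\in\check{R}(\kappa')$ for a witness $\check{x}\in\tilde{x}$, so that $(x_0,\tilde{x},u,\check{u})\in R'(\kappa')$ by (\ref{def_R'(epsilon)}). These witnesses put $((({\bf \hat{x}_C},{\bf \hat{x}}),\tilde{x}),x_0)$ in ${\bf \bar{R}_C}_X(\kappa+\kappa')$ with $\epsilon_1=\kappa$, $\epsilon_2=\kappa'$.

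For condition~2, fix $((({\bf \hat{x}_C},{\bf \hat{x}}),\tilde{x}),x)\in{\bf \bar{R}_C}_X(\epsilon)$ with witnesses $\epsilon_1\ge\kappa$, $\epsilon_2\ge\kappa'$, $\hat{x}^*\in{\bf \hat{x}}$, and an admissible input $((\hat{u}_C,\hat{u}),\check{u})\in\bar{\bf U}_{\bf C}$. Admissibility together with Definition~\ref{def_bf_Shat} forces $\hat{u}\in\hat{U}(\hat{x}^*)$, so (\ref{assum_R(epsilon)}) selects $u\in U$ with $(\hat{x}^*,x,\hat{u},u)\in R(\epsilon_1)$; the acASR property of $R$ then gives $u\in U(x)$. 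Admissibility also unfolds via ${\bf R_C}(\cdot)$ and (\ref{def_bf_R(epsilon)}) to produce, for each $\check{x}\in\tilde{x}$, an input $u^{\star}$ pairing $\hat{u}$ with $\check{u}$ through $R$ and $\check{R}$; combined with (\ref{assum_Rcheck(epsilon)}) applied at $u$ and $(x,\tilde{x})\in R'_X(\epsilon_2)$, this yields some $\check{x}_0\in\tilde{x}$ with $(x,\check{x}_0,u,\check{u})\in\check{R}(\epsilon_2)$, i.e., $(x,\tilde{x},u,\check{u})\in R'(\epsilon_2)$. Therefore the tuple sits in ${\bf \bar{R}_C}(\epsilon)$. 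Given any $x'\in r(x,u)$, the acASR $R$ supplies $\hat{x}'\in\hat{r}(\hat{x}^*,\hat{u})$ with $(\hat{x}',x')\in R_X(\kappa+\beta\epsilon_1+\lambda{\sf d}(\hat{u},u))$, and Theorem~\ref{thm_obs_acSR} supplies $\tilde{x}'\in\tilde{r}(\tilde{x},\check{u})$ with $(x',\tilde{x}')\in R'_X(\kappa'+\beta'\epsilon_2+\lambda'\check{\sf d}(u,\check{u}))$. Using Definition~\ref{def_bf_Shat} I build ${\bf \hat{x}}'\in{\bf \hat{r}}({\bf \hat{x}},\hat{u})$ containing $\hat{x}'$, and using (\ref{restrictive_R_Chat}) together with Definition~\ref{def_bf_S_Chat} I construct ${\bf \hat{x}_C}'\in{\bf \hat{r}_C}({\bf \hat{x}_C},\hat{u}_C)$ preserving the ${\bf \hat{R}_C}$-link, giving an actual successor in $\bar{\bf r}_{\bf C}$.

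The successor parameter is then controlled by $\beta\epsilon_1+\beta'\epsilon_2\le\max\{\beta,\beta'\}\epsilon$ and $\lambda{\sf d}(\hat{u},u)+\lambda'\check{\sf d}(u,\check{u})\le\max\{\lambda,\lambda'\}({\sf d}(\hat{u},u)+\check{\sf d}(u,\check{u}))\le\max\{\lambda,\lambda'\}\bar{\sf d}(\hat{u},\check{u})$ via (\ref{assum_map}), so with $\epsilon_1':=\kappa+\beta\epsilon_1+\lambda{\sf d}(\hat{u},u)$ and $\epsilon_2':=\kappa'+\beta'\epsilon_2+\lambda'\check{\sf d}(u,\check{u})$ we obtain $((({\bf \hat{x}_C}',{\bf \hat{x}}'),\tilde{x}'),x')\in{\bf \bar{R}_C}_X(\epsilon_1'+\epsilon_2')\subseteq{\bf \bar{R}_C}_X(\kappa+\kappa'+\max\{\beta,\beta'\}\epsilon+\max\{\lambda,\lambda'\}\bar{\sf d}(\hat{u},\check{u}))$. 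The main obstacle I expect is in the third paragraph: the ``exists $u$'' quantifier hidden inside (\ref{def_bf_R(epsilon)}) does not a priori coincide with the canonical $u$ chosen by (\ref{assum_R(epsilon)}), so showing that the admissible $\check{u}$ is consistent with our chosen $u$ under $\check{R}(\epsilon_2)$ requires carefully using (\ref{assum_R(epsilon)}) and (\ref{assum_Rcheck(epsilon)}) to uniformize these witnesses across the current state $x$ and the one supplied by admissibility, which is precisely the bookkeeping that makes these assumptions essential rather than mere convenience.
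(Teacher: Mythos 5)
Your proposal is correct and follows essentially the same route as the paper's own proof: it verifies the two acASR conditions directly by splitting $\epsilon=\epsilon_1+\epsilon_2$ to chain the acASR $R(\epsilon_1)$ with the acSR $R'(\epsilon_2)$ of Theorem \ref{thm_obs_acSR}, uses (\ref{restrictive_R_Chat})--(\ref{assum_Rcheck(epsilon)}) to align $\hat{u}_C,\hat{u},u,\check{u}$, builds the successor $(({\bf \hat{x}'_C},{\bf \hat{x}'}),\tilde{x}')$ exactly as the paper does via ${\bf \hat{r}}$, ${\bf \hat{r}_C}$ and the ASR ${\bf \hat{R}_C}$, and closes with the same $\max\{\beta,\beta'\}$, $\max\{\lambda,\lambda'\}$, $\bar{\sf d}$ estimate. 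The only stylistic difference is that the paper delegates the choice of ${\bf \hat{x}'}$ with $\hat{x}'\in{\bf \hat{x}'}$ and $({\bf \hat{x}'},\tilde{x}')\in{\bf R}_X(\cdot)$ to Lemma \ref{lem_bf_R}, whereas you construct it directly; both are fine.
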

where $R'(\epsilon)$ is defined by (\ref{def_R'(epsilon)}).

The definition of ${\bf R_C}(\epsilon)$ corresponds to that of $R_C(\epsilon)$ given by (\ref{eq_R_C(epsilon)}).
Then, Theorem \ref{result_main} is an extension of Theorem \ref{tabuada} to the output feedback control of the physical plant.
The block diagram of the proposed control system is shown in Fig.~\ref{fig_structure}.
When the observer $\tilde{S}$ receives the output $y$, it updates the estimtion $\tilde{x}$ of the current state of the plant. Then, $\bf {S}_C$ updates the state $({\bf \hat{x}_C}, {\bf \hat{x}})$ by the relation ${\bf R_C}(\epsilon)$.
The controller determines a control input $\hat{u}_C \in {\bf \hat{U}_C(\hat{x}_C)}$ such that for any $\hat{x}_C \in {\bf \hat{x}_C}$, $\hat{r}_C(\hat{x}_C, \hat{u}_C) \neq \emptyset$ holds.
This is the key idea of our approach. 
%The output feedback controller of $S$ is $\bar{\sf S}_{\bf C}$ that is the composition of $\bf S_C$ and $\tilde{S}$.  
%We proved that there exists an acASR from the output feedback controller $\bar{\bf S}_{\bf C}$ to $S$.
Then, by Theorem \ref{result_main}, the physical plant controlled by the output feedback controller $\bf \bar{S}_{C}$ exhibits a desired behavior approximately in the sense of the acASR.

In the case where $\hat{S}$ and $\check{S}$ are constructed by discretization of the state set $X$ of the plant, we can define the parameterized relations based on the Euclidean distance on $X$ that measures the abstraction error as shown in the illustrative example.  

In the next section, we show the proofs of these theorems.

\section{Proofs of Main Theorems}
\label{Sec_proof_of_main_results}
%
%In this section, we prove Theorems \ref{result_bf_R_C(epsilon)} and \ref{result_main}.
First, we show two lemmas that are needed in the proofs of the main theorems.

\begin{lem}
\label{lem_bf hat{R}_C}
%Let $\hat{S}_C=(\hat{X}_C, \hat{X}_{C0}, \hat{U}_C, \hat{r}_C, \hat{Y}_C, \hat{H}_C)$ and $\hat{S}=(\hat{X}, \hat{X}_0, \hat{U}, \hat{r}, \hat{Y}, \hat{H})$ be two systems.
%There exist an ASR $\hat{R}_C \subseteq \hat{X}_C \times \hat{X} \times \hat{U}_C \times \hat{U}$ from $\hat{S}_C$ to $\hat{S}$.
%Assume that $\hat{R}_C$ satisfies the following:
%\begin{align}
%\label{restrictive_R_Chat}
%\forall \hat{u}_C \in \hat{U}_C, \exists \hat{u} \in \hat{U}, \forall (\hat{x}_C, \hat{x}) \in \hat{R}_{CX}: (\hat{x}_C, \hat{x}, \hat{u}_C, \hat{u}) \in \hat{R}_C.
%\end{align}
%Consider ${\bf \hat{S}}=({\bf \hat{X}}, {\bf \hat{X}_0}, {\bf U}, {\bf \hat{r}}, {\bf \hat{Y}}, {\bf \hat{H}})$ induced by $\hat{S}$ and ${\bf \hat{S}_C}=({\bf \hat{X}_C}, {\bf \hat{X}_{C0}}, {\bf U_C}, {\bf \hat{r}_C}, {\bf \hat{Y}_C}, {\bf \hat{H}_C})$ induced by $\hat{S}_C$ defined in Definition \ref{def_bf_Shat} and \ref{def_bf_S_Chat} respectively.
The relation ${\bf \hat{R}_C} \subseteq {\bf \hat{X}_C} \times {\bf \hat{X}} \times {\bf \hat{U}_C} \times {\bf \hat{U}}$ defined by (\ref{def_bf_R_Chat}) is an ASR from $\bf \hat{S}_C$ to $\bf \hat{S}$.
%\begin{align}
%\label{def_bf_R_Chat}
%{\bf \hat{R}_C} = \{ ({\bf \hat{x}_C}, {\bf \hat{x}}, \hat{u}_C, \hat{u}) \ | \ \forall \hat{x} \in {\bf \hat{x}}, \exists \hat{x}_C \in {\bf \hat{x}_C}: (\hat{x}_C, \hat{x}, \hat{u}_C, \hat{u}) \in \hat{R}_C \}.
%\end{align}
\end{lem}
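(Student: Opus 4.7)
Since the relation $\mathbf{\hat{R}_C}$ is not parameterized, I would verify the two conditions of an alternating simulation relation (the $\kappa=\beta=\lambda=0$ case of Definition \ref{def_acASR}) from $\mathbf{\hat{S}_C}$ to $\mathbf{\hat{S}}$, leveraging that $\hat{R}_C$ is itself an ASR together with assumption (\ref{restrictive_R_Chat}) to manufacture common input witnesses at the set level.

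For the initial condition, I would take any $\mathbf{\hat{x}_{C0}} \in \mathbf{\hat{X}_{C0}}$, pick any $\hat{x}_C$ in the nonempty set $\mathbf{\hat{x}_{C0}}$, use the initial-state clause of $\hat{R}_C$ to obtain $\hat{x}_0 \in \hat{X}_0$ with $(\hat{x}_C, \hat{x}_0) \in \hat{R}_{CX}$, and set $\mathbf{\hat{x}_0} := \{\hat{x}_0\} \in \mathbf{\hat{X}_0}$. Because $\mathbf{\hat{x}_0}$ is a singleton, the defining condition of $\mathbf{\hat{R}_C}$ in (\ref{def_bf_R_Chat}) holds trivially at the pair $(\mathbf{\hat{x}_{C0}}, \mathbf{\hat{x}_0})$ for any input pair witnessing $(\hat{x}_C, \hat{x}_0) \in \hat{R}_{CX}$.

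For the transition condition, I would fix $(\mathbf{\hat{x}_C}, \mathbf{\hat{x}}) \in \mathbf{\hat{R}_{CX}}$ and $\hat{u}_C \in \mathbf{\hat{U}_C}(\mathbf{\hat{x}_C})$; by Definition \ref{def_bf_S_Chat} this means $\hat{u}_C \in \hat{U}_C(\hat{x}_C)$ for every $\hat{x}_C \in \mathbf{\hat{x}_C}$. Applying (\ref{restrictive_R_Chat}) to $\hat{u}_C$ yields one $\hat{u} \in \hat{U}$ such that $(\hat{x}_C, \hat{x}, \hat{u}_C, \hat{u}) \in \hat{R}_C$ for every $(\hat{x}_C, \hat{x}) \in \hat{R}_{CX}$ with $\hat{u}_C \in \hat{U}_C(\hat{x}_C)$. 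Since (\ref{def_bf_R_Chat}) guarantees, for each $\hat{x} \in \mathbf{\hat{x}}$, some $\hat{x}_C \in \mathbf{\hat{x}_C}$ with $(\hat{x}_C, \hat{x}) \in \hat{R}_{CX}$, this single $\hat{u}$ certifies $(\mathbf{\hat{x}_C}, \mathbf{\hat{x}}, \hat{u}_C, \hat{u}) \in \mathbf{\hat{R}_C}$ and also $\hat{u} \in \mathbf{\hat{U}}(\mathbf{\hat{x}})$.

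Finally, for any $\mathbf{\hat{x}'} \in \mathbf{\hat{r}}(\mathbf{\hat{x}}, \hat{u})$, I would build the matching successor $\mathbf{\hat{x}_C'}$ pointwise: for each $\hat{x}' \in \mathbf{\hat{x}'}$, pick some $\hat{x} \in \mathbf{\hat{x}}$ with $\hat{x}' \in \hat{r}(\hat{x}, \hat{u})$ and the corresponding $\hat{x}_C \in \mathbf{\hat{x}_C}$, then apply the ASR property of $\hat{R}_C$ to produce $\hat{x}_C' \in \hat{r}_C(\hat{x}_C, \hat{u}_C)$ with $(\hat{x}_C', \hat{x}') \in \hat{R}_{CX}$. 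By construction $\mathbf{\hat{x}_C'} \in \mathbf{\hat{r}_C}(\mathbf{\hat{x}_C}, \hat{u}_C)$, and the set-level membership $(\mathbf{\hat{x}_C'}, \mathbf{\hat{x}'}) \in \mathbf{\hat{R}_{CX}}$ follows from one more application of (\ref{restrictive_R_Chat}) to supply a uniform input pair across $\mathbf{\hat{x}_C'}$. The main obstacle is precisely this last lifting step: turning the pointwise relations $(\hat{x}_C', \hat{x}') \in \hat{R}_{CX}$ into a single set-level witness in $\mathbf{\hat{R}_C}$, which is exactly what the restrictive assumption (\ref{restrictive_R_Chat}) is designed to enable.
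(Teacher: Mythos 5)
Your proof is correct and follows essentially the same route as the paper's: apply (\ref{restrictive_R_Chat}) to obtain a single input $\hat{u}$ uniformly over ${\bf \hat{x}}$, then use the ASR property of $\hat{R}_C$ pointwise to assemble the successor set ${\bf \hat{x}'_C} \in {\bf \hat{r}_C}({\bf \hat{x}_C}, \hat{u}_C)$. The only differences are cosmetic: you pick a singleton initial witness ${\bf \hat{x}_0}=\{\hat{x}_0\}$ where the paper takes the full set of $\hat{R}_{CX}$-related initial states, and you are more explicit than the paper about needing one further application of (\ref{restrictive_R_Chat}) to turn the pointwise memberships $(\hat{x}'_C, \hat{x}') \in \hat{R}_{CX}$ into the set-level membership $({\bf \hat{x}'_C}, {\bf \hat{x}'}) \in {\bf \hat{R}}_{{\bf C}X}$.
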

\begin{proof}
We will show that $\bf \hat{R}_C$ satisfies the conditions of an ASR from $\bf \hat{S}_C$ to $\bf \hat{S}$.
%For notational convenience,  we define the projection of a relation induced by ${\bf \hat{R}_C}$:
% denoted by ${\bf \hat{R}}_{{\bf C}X}(\epsilon) \subseteq {\bf \hat{X}_C} \times {\bf \hat{X}}$:
%\begin{align}
%\label{def_bf_R_CXhat}
%{\bf \hat{R}}_{{\bf C}X} = \{ ({\bf \hat{x}_C}, {\bf \hat{x}}) \in {\bf \hat{X}_C} \times {\bf \hat{X}} \ | \ \exists \hat{u}_C \in {\bf \hat{U}_C}, \exists \hat{u} \in {\bf \hat{U}}: ({\bf \hat{x}_C}, {\bf \hat{x}}, \hat{u}_C, \hat{u}) \in {\bf \hat{R}_C} \}.
%\end{align}
\begin{enumerate}
\item Consider any $\bf \hat{x}_{C0} \in \hat{X}_{C0}$. 
We consider the following state $\bf \hat{x}_0$ of $\bf \hat{S}$:
\begin{align*}
{\bf \hat{x}_0} = \{ \hat{x}_0 \in \hat{X}_0 \ | \ \exists \hat{x}_{C0} \in {\bf \hat{x}_{C0}}: (\hat{x}_{C0}, \hat{x}_0) \in \hat{R}_{CX} \}.
%&\forall \hat{x}_{C0} \in {\bf \hat{x}_{C0}}, \exists \hat{x}_0 \in {\bf \hat{x}_0}: (\hat{x}_{C0}, \hat{x}_0) \in \hat{R}_C\\
%\land \ &\forall \hat{x}_0 \in {\bf \hat{x}_0}, \exists \hat{x}_{C0} \in {\bf \hat{x}_{C0}}: (\hat{x}_{C0}, \hat{x}_0) \in \hat{R}_C.
\end{align*}
Since $\hat{R}_C$ is an ASR, ${\bf \hat{x}_0}$ is a non-empty set. 
Recall that ${\bf \hat{X}_0}=2^{\hat{X}_0} \setminus \{ \emptyset \}$, and we have ${\bf \hat{x}_0} \in {\bf \hat{X}_0}$. 
By the definition of $\bf \hat{R}_C$, ${\bf (\hat{x}_{C0}, \hat{x}_0) \in \hat{R}}_{{\bf C}X}$ holds. 
\item First, consider any ${\bf (\hat{x}_{C}, \hat{x}) \in \hat{R}}_{{\bf C}X}$.
Choose any $\hat{u}_C \in {\bf \hat{U}_C}({\bf \hat{x}_C})$.
From (\ref{restrictive_R_Chat}), there exists $\hat{u} \in {\bf \hat{U}}({\bf \hat{x}})$ such that $({\bf \hat{x}_{C}}, {\bf \hat{x}}, \hat{u}_C, \hat{u}) \in {\bf \hat{R}_{C}}$.
By the definition of $\bf \hat{R}_C$, the following condition holds:
\begin{align*}
\forall \hat{x} \in {\bf \hat{x}}, \exists \hat{x}_C \in {\bf \hat{x}_C}: (\hat{x}_C, \hat{x}, \hat{u}_C, \hat{u}) \in \hat{R}_C,
\end{align*}
which implies together with the definition of the ASR that
\begin{align}
\forall \hat{x}' \in \bigcup_{\hat{x} \in {\bf \hat{x}}}{\hat{r}(\hat{x}, \hat{u})}, \ \exists \hat{x}'_C \in \bigcup_{\hat{x}_C \in {\bf \hat{x}_C}}{\hat{r}_C(\hat{x}_C, \hat{u}_C)}: \ (\hat{x}'_C, \hat{x}') \in \hat{R}_{CX}.
\label{eq_proof_1}
\end{align}
Next, consider any ${\bf \hat{x}}' \in {\bf \hat{r}}({\bf \hat{x}}, \hat{u})$.
By the definition of $\bf \hat{r}$, we have
\begin{align*}
{\bf \hat{x}'} \subseteq \bigcup_{\hat{x} \in {\bf \hat{x}}}{\hat{r}(\hat{x}, \hat{u})}.
\end{align*}
By the definition of $\bf \hat{r}_C$, we have
\begin{align*}
{\bf \hat{r}_C}({\bf \hat{x}_C}, \hat{u}_C) =2^{\bigcup_{\hat{x}_C \in {\bf \hat{x}_C}}{\hat{r}_C(\hat{x}_C, \hat{u}_C)}} \setminus \{ \emptyset \}.
\end{align*}
Thus, from (\ref{eq_proof_1}), there always exists ${\bf \hat{x}'_C} \in {\bf \hat{r}_C}({\bf \hat{x}_C}, \hat{u}_C)$ satisfying the following condition:
\begin{align*}
\forall \hat{x}' \in {\bf \hat{x}'}, \ \exists \hat{x}'_C \in {\bf \hat{x}'_C}: \ (\hat{x}'_C, \hat{x}') \in \hat{R}_{CX}. 
\end{align*}
Therefore, by the definition of $\bf \hat{R}_C$, $({\bf \hat{x}'_C}, {\bf \hat{x}'}) \in {\bf \hat{R}}_{{\bf C}X}$ holds. \qed
\end{enumerate}
\end{proof}

Lemma \ref{lem_bf hat{R}_C} shows that $\bf \hat{S}_C$ is a feedback controller of $\bf \hat{S}$.

\begin{lem}
\label{lem_bf_R}
The relation ${\bf R}(\epsilon) \subseteq {\bf \hat{X}} \times \tilde{X} \times {\bf \hat{U}} \times \tilde{U}$ defined by (\ref{def_bf_R(epsilon)}) is a $(\kappa+\kappa', \max\{ \beta, \beta' \}, \max\{ \lambda, \lambda' \})$-acASR from $\bf \hat{S}$ to $\tilde{S}$ with $\bar{\sf d}$.
%\begin{align}
%\label{def_bf_R(epsilon)}
%{\bf R}(\epsilon) = \{ ({\bf \hat{x}}, \tilde{x}, \hat{u}, \check{u}) \ | \ &\exists u \in U, \exists \epsilon_1 \ge \kappa, \exists \epsilon_2 \ge \kappa', \forall \check{x} \in {\tilde{x}}, \exists x \in X, \exists \hat{x} \in {\bf \hat{x}}: \\
%&\epsilon_1 + \epsilon_2 = \epsilon \land (\hat{x}, x, \hat{u}, u) \in R(\epsilon_1) \land (x, \check{x}, u, \check{u}) \in \check{R}(\epsilon_2) \}.
%\end{align}
\end{lem}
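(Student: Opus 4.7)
The plan is to verify the two defining properties of a $(\kappa+\kappa', \max\{\beta, \beta'\}, \max\{\lambda, \lambda'\})$-acASR from $\hat{\mathbf S}$ to $\tilde S$ by threading the given $R(\epsilon_1)$ and $\check R(\epsilon_2)$ through the indexing over $\check x \in \tilde x$ that is built into the definition of $\mathbf R(\epsilon)$. For the initial condition, starting from any $\hat{\mathbf x}_0 \in \hat{\mathbf X}_0$ I would define $\tilde x_0$ to be the set of $\check x_0 \in \check X_0$ that complete a chain from some $\hat x_0 \in \hat{\mathbf x}_0$ through an intermediate $x_0 \in X_0$ with $(\hat x_0, x_0) \in R_X(\kappa)$ and $(x_0, \check x_0) \in \check R_X(\kappa')$. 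Non-emptiness follows by composing the initial-state clauses of $R$ and $\check R$ from any $\hat x_0 \in \hat{\mathbf x}_0$; picking an arbitrary $\hat u_0$ and invoking (\ref{assum_R(epsilon)})--(\ref{assum_Rcheck(epsilon)}) to extract uniform $u_0, \check u_0$ then produces a concrete witness tuple in $\mathbf R(\kappa+\kappa')$.

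For the transition condition, I first unpack $(\hat{\mathbf x}, \tilde x) \in \mathbf R_X(\epsilon)$ into witnesses $\hat u_0, u_0, \check u_0, \epsilon_1 \ge \kappa, \epsilon_2 \ge \kappa'$ with $\epsilon_1 + \epsilon_2 = \epsilon$ and a family $(\hat x_{\check x}, x_{\check x})_{\check x \in \tilde x}$ realising the double chain. Given any $\hat u \in \hat{\mathbf U}(\hat{\mathbf x})$, which by Definition~\ref{def_bf_Shat} forces $\hat u \in \hat U(\hat x)$ for every $\hat x \in \hat{\mathbf x}$, assumption (\ref{assum_R(epsilon)}) supplies a single $u$ upgrading each pointwise $R_X(\epsilon_1)$-pair to a full $R(\epsilon_1)$-tuple, and (\ref{assum_Rcheck(epsilon)}) then supplies a single $\check u$ upgrading each $\check R_X(\epsilon_2)$-pair to a full $\check R(\epsilon_2)$-tuple, so $(\hat{\mathbf x}, \tilde x, \hat u, \check u) \in \mathbf R(\epsilon)$ follows at once.

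For the next-state step I take $\hat{\mathbf x}' := \bigcup_{\hat x \in \hat{\mathbf x}} \hat r(\hat x, \hat u)$, which is non-empty and therefore belongs to $\hat{\mathbf r}(\hat{\mathbf x}, \hat u)$ by Definition~\ref{def_bf_Shat}. Fix $\tilde x' \in \tilde r(\tilde x, \check u)$ and any $\check x' \in \tilde x'$; the definition of $\tilde r$ delivers $\check x \in \tilde x$ with $\check x' \in \check r(\check x, \check u)$. Combining this with the $\mathbf R$-chain at $\check x$, the acASR of $R$ applied to $(\hat x_{\check x}, x_{\check x}, \hat u, u)$ and the acSR of $\check R$ applied to $(x_{\check x}, \check x, u, \check u)$ jointly produce $\hat x^\# \in \hat r(\hat x_{\check x}, \hat u) \subseteq \hat{\mathbf x}'$, $x^\# \in r(x_{\check x}, u)$ and $\check x^\# \in \check r(\check x, \check u)$ satisfying $(\hat x^\#, x^\#) \in R_X(\kappa + \beta \epsilon_1 + \lambda \mathsf d(\hat u, u))$ and $(x^\#, \check x^\#) \in \check R_X(\kappa' + \beta' \epsilon_2 + \lambda' \check{\mathsf d}(u, \check u))$. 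The required bound $\kappa + \kappa' + \max\{\beta, \beta'\} \epsilon + \max\{\lambda, \lambda'\} \bar{\mathsf d}(\hat u, \check u)$ then falls out of the obvious max-inequalities, assumption (\ref{assum_map}), and monotonicity of $\mathbf R$ in $\epsilon$.

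The main obstacle I anticipate is that the acSR of $\check R$ only matches $x^\#$ to \emph{some} $\check x^\# \in \check r(\check x, \check u)$, not necessarily to the prescribed $\check x'$; closing this gap means exploiting the existentials inside the $\tilde r$-definition so that the $\tilde r$-witness $(x, u, x')$ for $\check x'$ can be chosen to coincide with the $\mathbf R$-chain's $(x_{\check x}, u, x^\#)$, thereby forcing $\check x^\# = \check x'$. This alignment is precisely where the uniformity supplied by (\ref{assum_R(epsilon)})--(\ref{assum_Rcheck(epsilon)}) and the distance-compatibility (\ref{assum_map}) carry the real weight of the proof.
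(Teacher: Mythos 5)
Your set-up matches the paper for the first two thirds of the argument: the initial state $\tilde x_0$ built from the chain $(\hat x_0,x_0)\in R_X(\kappa)$, $(x_0,\check x_0)\in\check R_X(\kappa')$ is exactly the paper's witness, the extraction of a uniform pair $(u,\check u)$ via (\ref{assum_R(epsilon)}) and (\ref{assum_Rcheck(epsilon)}) is the paper's step, and taking $\hat{\mathbf x}'=\bigcup_{\hat x\in\hat{\mathbf x}}\hat r(\hat x,\hat u)$ is a legitimate instance of the successor the paper exhibits. The genuine gap is precisely the one you flag in your last paragraph, and the repair you sketch does not work. Running the acSR of $\check R$ \emph{forward} from a chosen $x^\#\in r(x_{\check x},u)$ only yields \emph{some} $\check x^\#\in\check r(\check x,\check u)$ related to $x^\#$; the acSR gives you no control over which abstract successor it returns, and no re-choosing of the existential witnesses in the definition of $\tilde r$ will ``force $\check x^\#=\check x'$''. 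Different elements of $r(x_{\check x},u)$ are matched to different abstract successors, so your argument never establishes that the \emph{prescribed} $\check x'\in\tilde x'$ is related to a plant successor that is in turn $R_X$-related to something in $\hat{\mathbf x}'$.

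The paper closes this by reversing the direction of the matching. For a fixed $\tilde x'\in\tilde r(\tilde x,\check u)$ and each $\check x'\in\tilde x'$, it is the \emph{definition of $\tilde r$ itself} (Definition \ref{def_observer}) that supplies a concrete transition $x'\in r(x,u)$ with $(x,\check x,u,\check u)\in\check R(\epsilon_2)$ and $(x',\check x')\in\check R_X(\kappa'+\beta'\epsilon_2+\lambda'\check{\sf d}(u,\check u))$: membership of $\tilde x'$ in $\tilde r(\tilde x,\check u)$ already certifies, for every one of its elements, a plant-level witness transition landing in the right relation (the acSR of $\check R$ is only needed to know that such successors exist at all, and (\ref{assum_R(epsilon)})--(\ref{assum_Rcheck(epsilon)}) align the witness's input with the uniform $u$). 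One then applies the acASR of $R$ to that $x'$ --- every plant successor of an $R_X(\epsilon_1)$-related $x$ is matched by some $\hat x'\in\hat r(\hat x,\hat u)\subseteq\hat{\mathbf x}'$ with $(\hat x',x')\in R_X(\kappa+\beta\epsilon_1+\lambda{\sf d}(\hat u,u))$ --- and finishes with (\ref{assum_map}), the max-inequalities and monotonicity exactly as you describe. So the chain must be built backwards from $\check x'$ through $x'$ to $\hat x'$, not forwards from an arbitrarily chosen $x^\#$; without this reversal the membership $(\hat{\mathbf x}',\tilde x')\in{\bf R}_X(\kappa+\kappa'+\max\{\beta,\beta'\}\epsilon+\max\{\lambda,\lambda'\}\bar{\sf d}(\hat u,\check u))$ is not obtained.
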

\begin{proof}
We will show that ${\bf R}(\epsilon)$ satisfies the conditions of a $(\kappa+\kappa', \max\{\beta, \beta'\}, \max\{\lambda, \lambda'\})$-acASR from $\bf \hat{S}$ to $\tilde{S}$ with $\bar{\sf d}$.
%For notational convenience,  we define the projection ${\bf R}_X(\epsilon)$ of a relation induced by ${\bf R}(\epsilon)$ as follows: %denoted by ${\bf R}_{X}(\epsilon) \subseteq {\bf \hat{X}} \times {\tilde{X}}$:
%\begin{align*}
%{\bf R}_{X}(\epsilon) = \{ ({\bf \hat{x}}, \tilde{x}) \in {\bf \hat{X}} \times \tilde{X} \ | \ \exists \hat{u} \in {\bf \hat{U}}, \exists \check{u} \in \tilde{U}: ({\bf \hat{x}}, \tilde{x}, \hat{u}, \check{u}) \in {\bf R}(\epsilon) \}.
%\end{align*}
\begin{enumerate}
\item Consider any $\bf \hat{x}_0 \in \hat{X}_0$.
We consider the following state $\tilde{x}_0$ of $\tilde{S}$:
\begin{align*}
\tilde{x}_0 = \{ \check{x}_0 \in \check{X}_0 \ | \ \exists \hat{x}_0 \in {\bf \hat{x}_0}, \exists x_{0} \in X_0: (\hat{x}_0, x_{0}) \in R_X(\kappa) \land (x_{0}, \check{x}_0) \in \check{R}_X(\kappa') \}.
\end{align*}
Since $R(\epsilon)$ and $\check{R}(\epsilon)$ are a $(\kappa, \beta, \lambda)$-acASR from $\hat{S}$ to $S$ and a $(\kappa', \beta', \lambda')$-acSR from $S$ to $\check{S}$, respectively, $\tilde{x}_0$ is a non-empty set.
Recall that $\tilde{X}_0 = 2^{\check{X}_0} \setminus \{ \emptyset \}$, and we have $\tilde{x}_0 \in \tilde{X}_0$.
By the definition of ${\bf R}(\epsilon)$, $({\bf \hat{x}_0}, \tilde{x}_0) \in {\bf R}_X(\kappa+\kappa')$ holds.
\item First, consider any $({\bf \hat{x}}, \tilde{x}) \in {\bf R}_X(\epsilon)$.
Choose any $\hat{u} \in {\bf \hat{U}}({\bf \hat{x}})$.
From (\ref{assum_R(epsilon)}) and (\ref{assum_Rcheck(epsilon)}), there exist $u \in U$ and $\check{u} \in \tilde{U}(\tilde{x})$ such that $({\bf \hat{x}}, \tilde{x}, \hat{u}, \check{u}) \in {\bf R}(\epsilon)$.
By the definition of ${\bf R}(\epsilon)$, there exist $\epsilon_1 \ge \kappa$ and $\epsilon_2 \ge \kappa'$ such that $\epsilon_1+\epsilon_2=\epsilon$, and the following condition holds:
\begin{align*}
\forall \check{x} \in \tilde{x}, \exists x \in X, \exists \hat{x} \in {\bf \hat{x}}: (\hat{x}, x, \hat{u}, u) \in R(\epsilon_1) \land (x, \check{x}, u, \check{u}) \in \check{R}(\epsilon_2),
\end{align*}
which implies together with the definition of $\tilde{r}$ that
\begin{align*}
\forall \check{x}' \in \bigcup_{\check{x} \in \tilde{x}}{\check{r}(\check{x}, u)}, \exists x' \in \bigcup_{\check{x} \in \tilde{x}, (\check{x}, x) \in \check{R}_X(\epsilon_2)}{r(x, u)}: \ (x', \check{x}') \in \check{R}_X(\kappa'+\beta'\epsilon_2+\lambda'\check{\sf d}(u, \check{u})).
\end{align*}
By the definition of $(\kappa, \beta, \lambda)$-acASR from $\hat{S}$ to $S$ with $\sf d$, we have the following condition:
\begin{align*}
\forall x' \in \bigcup_{\check{x} \in \tilde{x}, (\check{x}, x) \in \check{R}_X(\epsilon_2)}{r(x, u)}, \ \exists \hat{x}' \in \bigcup_{\hat{x} \in {\bf \hat{x}}}{\hat{r}(\hat{x}, \hat{u})}: \ (\hat{x}', x') \in R_X(\kappa+\beta\epsilon_1+\lambda{\sf d}(\hat{u}, u)).
\end{align*}
Thus, we have the following condition:
\begin{align}
\label{eq_pr_2}
\forall \check{x}' \in \bigcup_{\check{x} \in \tilde{x}}{\check{r}(\check{x}, \check{u})}, \ &\exists x' \in \bigcup_{\check{x} \in \tilde{x}, (\check{x}, x) \in \check{R}_X(\epsilon_2)}{r(x, u)}, \ \exists \hat{x}' \in \bigcup_{\hat{x} \in {\bf \hat{x}}}{\hat{r}(\hat{x}, \hat{u})}: \nonumber \\
&(\hat{x}', x') \in R_X(\kappa+\beta\epsilon_1+\lambda{\sf d}(\hat{u}, u)) \land (x', \check{x}') \in \check{R}_X(\kappa'+\beta'\epsilon_2+\lambda'\check{\sf d}(u, \check{u})).
\end{align}
Next, consider any $\tilde{x}' \in \tilde{r}(\tilde{x}, \check{u})$.
By the definition of $\tilde{r}$, we have
\begin{align*}
\tilde{x}' \subseteq \bigcup_{\check{x} \in \tilde{x}}{\check{r}(\check{x}, \check{u})}.
\end{align*}
By the definition of $\bf \hat{r}$, we have
\begin{align*}
{\bf \hat{r}}({\bf \hat{x}}, \hat{u}) = 2^{\bigcup_{\hat{x} \in {\bf \hat{x}}}{\hat{r}(\hat{x}, \hat{u})}} \setminus \{\emptyset\},
\end{align*}
which implies together with (\ref{eq_pr_2}) that there always exists ${\bf \hat{x}'} \in {\bf \hat{r}}({\bf \hat{x}}, \hat{u})$ satisfying the following condition:
\begin{align*}
&\forall \check{x}' \in \tilde{x}', \exists x' \in X, \exists \hat{x}' \in {\bf \hat{x}}': (\hat{x}', x') \in R_X(\kappa+\beta\epsilon_1+\lambda{\sf d}(\hat{u}, u)) \land (x', \check{x}') \in \check{R}_X(\kappa'+\beta'\epsilon_2+\lambda'\check{\sf d}(u, \check{u})).
\end{align*} 
Thus, by the definition of ${\bf R}(\epsilon)$ and (\ref{assum_map}), we have
\begin{align*}
({\bf \hat{x}'}, \tilde{x}') &\in {\bf R}(\kappa+\kappa'+\beta\epsilon_1+\beta'\epsilon_2+\lambda{\sf d}(\hat{u}, u)+\lambda'\check{\sf d}(u, \check{u}))\\
&\subseteq {\bf R}(\kappa+\kappa'+\max\{\beta, \beta'\}\epsilon+\max\{ \lambda, \lambda' \} \bar{\sf d}(\hat{u}, \check{u})).
\end{align*}
 \qed
\end{enumerate}
\end{proof}
By Lemmas \ref{lem_bf hat{R}_C} and \ref{lem_bf_R}, $\bf \hat{R}_C$ and ${\bf R}(\epsilon)$ are an ASR from $\bf \hat{S}_C$ to $\bf \hat{S}$ and a $(\kappa+\kappa', \max\{ \beta, \beta' \}, \max\{ \lambda, \lambda' \})$-acASR from $\bf \hat{S}$ to $\tilde{S}$ with $\bar{\sf d}$, respectively, which implies together with Theorem \ref{tabuada} that Theorem \ref{result_bf_R_C(epsilon)} holds.
%\end{proof}

Next, we prove Theorem \ref{result_main}.
%
%\let\temp\thethm %number setting
%\renewcommand{\thethm}{\ref{result_main}}
%\begin{thm}
%Consider the same condition as Lemma \ref{lem_bf_R_C(epsilon)}.
%Let ${\bf \bar{S}_C} := {\bf S_C} \times_{{\bf R_C}(\epsilon)} \tilde{S} = (\bar{\bf X}_{\bf C}, \bar{\bf X}_{\bf C0}, \bar{\bf U}_{\bf C}, \bar{\bf r}_{\bf C}, \bar{\bf Y}_{\bf C}, \bar{\bf H}_{\bf C})$.
%The relation ${\bf \bar{R}_C}(\epsilon) \subseteq (({\bf \hat{X}_C} \times {\bf \hat{X}}) \times \tilde{X}) \times X \times (({\bf \hat{U}_C} \times {\bf \hat{U}}) \times \tilde{U}) \times U$ defined in (\ref{def_bf_RCbar(epsilon)}) is a $(\kappa + \kappa', \max\{ \beta, \beta'\}, \max\{ \lambda, \lambda' \})$-acASR from $\bf \bar{S}_C$ to ${S}$ with $\bar{\sf d}_{\sf C}(((\hat{u}_C, \hat{u}), \check{u}), u) = \bar{\sf d}(\hat{u}, \check{u})$.
%\begin{align*}
%{\bf \bar{R}_C}(\epsilon) = \{((({\bf \hat{x}_C}, {\bf \hat{x}}), \tilde{x}), x, ((\hat{u}_C, \hat{u}), \check{u}), u) \ | \ &\exists \epsilon_1 \ge \kappa, \exists \epsilon_2 \ge \kappa', \exists \hat{x} \in {\bf \hat{x}}: \\
%&\epsilon_1 + \epsilon_2 = \epsilon \land (\hat{x}, x, \hat{u}, u) \in R(\epsilon_1) \land (x, \tilde{x}, u, \check{u}) \in {R}'(\epsilon_2) \}.
%\end{align*}
%\end{thm}
%\begin{proof}
We will show that $\bar{\bf R}_{\bf C}(\epsilon)$ satisfies the conditions of a $(\kappa+\kappa', \max\{\beta, \beta'\},\\ \max\{\lambda, \lambda'\})$-acASR from $\bar{\bf S}_{\bf C}$ to $({S}, Y, H)$ with $\bar{\sf d}_{\sf C}$.
%For notational convenience,  we define the projection $\bar{\bf R}_{{\bf C}X}(\epsilon)$ of a relation induced by $\bar{\bf R}_{\bf C}(\epsilon)$ as follows: %denoted by ${\bf R}_{X}(\epsilon) \subseteq {\bf \hat{X}} \times {\tilde{X}}$:
%\begin{align*}
%&\bar{\bf R}_{{\bf C}X}(\epsilon) = \\
%&\{ ((({\bf \hat{x}_C}, {\bf \hat{x}}), \tilde{x}), x) \in \bar{\bf X}_{\bf C} \times X \ | \ \exists ((\hat{u}_C, \hat{u}), \check{u}) \in \bar{\bf U}_{\bf C}, \exists {u} \in {U}: ((({\bf \hat{x}_C}, {\bf \hat{x}}), \tilde{x}), x, ((\hat{u}_C, \hat{u}), \check{u}), u) \in \bar{\bf R}_{\bf C}(\epsilon) \}.
%\end{align*}
\begin{enumerate}
\item Consider any $(({\bf \hat{x}_{C0}}, {\bf \hat{x}_0}), \tilde{x}_0) \in \bar{\bf X}_{\bf C0}$.
By the definition of the composed systems, we have ${\bf (\hat{x}_{C0}, \hat{x}_0)} \in {\bf \hat{R}}_{{\bf C}X}$ and $({\bf \hat{x}_0}, \tilde{x}_0) \in {\bf R}_X(\kappa+\kappa')$.
By the proof (1) of Lemma \ref{lem_bf_R}, there exists $x_0 \in X_0$ such that
\begin{align*}
\exists \hat{x}_0 \in {\bf \hat{x}_0}, &\exists \check{x}_0 \in \tilde{x}_0: (\hat{x}_0, x_0) \in R_X(\kappa) \land (x_0, \check{x}_0) \in \check{R}_X(\kappa'),
\end{align*}
which implies together with the definition of $R'(\epsilon)$ that we have $((({\bf \hat{x}_{C0}}, {\bf \hat{x}_0}), \tilde{x}_0), x_0) \in {\bf \bar{R}}_{{\bf C}X}(\kappa+\kappa'))$. 
\item First, consider any $((({\bf \hat{x}_C}, {\bf \hat{x}}), \tilde{x}), x) \in {\bf \bar{R}}_{{\bf C}X}(\epsilon)$.
Choose any $((\hat{u}_C, \hat{u}), \check{u}) \in \bar{\bf U}_{\bf C}((({\bf \hat{x}_C}, {\bf \hat{x}}), \tilde{x}))$.
By the definition of the composed systems, we have $({\bf \hat{x}_C}, {\bf \hat{x}}, \hat{u}_C, \hat{u}) \in {\bf \hat{R}}_{\bf C}$ and $({\bf \hat{x}}, \tilde{x}_0, \hat{u}, \check{u}) \in {\bf R}(\epsilon)$.
By the definition of ${\bf R}(\epsilon)$, there exists $u \in U(x)$ such that
\begin{align*} 
\exists \epsilon_1 \ge \kappa, \exists \epsilon_2 \ge \kappa', \exists \check{x} \in {\tilde{x}}, \exists \hat{x} \in {\bf \hat{x}}: \epsilon_1 + \epsilon_2 = \epsilon \land (\hat{x}, x, \hat{u}, u) \in R(\epsilon_1) \land (x, \check{x}, u, \check{u}) \in \check{R}(\epsilon_2),
\end{align*}
which implies together with the definition of $R'(\epsilon)$ that we have $((({\bf \hat{x}_C}, {\bf \hat{x}}), \tilde{x}), x, ((\hat{u}_C, \hat{u}), \check{u}), u) \in {\bf \bar{R}}_{\bf C}(\epsilon)$.

Next, consider any $x' \in r(x, u)$.
Since $R(\epsilon)$ is a $(\kappa, \beta, \lambda)$-acASR from $\hat{S}$ to $S$ with $\sf d$, there exists $\hat{x}' \in \hat{r}(\hat{x}, \hat{u})$ such that $(\hat{x}', x') \in R_X(\kappa + \beta \epsilon_1 + \lambda {\sf d}(\hat{u}, u))$.
Moreover, since $R'(\epsilon)$ is a $(\kappa', \beta', \lambda')$-acSR from $S$ to $\tilde{S}$ with $\check{\sf d}$, there exists $\tilde{x}' \in \tilde{r}(\tilde{x}, \check{u})$ such that
\begin{align}
\label{eq_chk_R}
\exists \check{x}' \in \tilde{x}': (x', \check{x}') \in \check{R}_X(\kappa'+\beta'\epsilon_2+\lambda'\check{\sf d}(u, \check{u})).
\end{align}
Since ${\bf R}(\epsilon)$ is a $(\kappa+\kappa', \max\{\beta, \beta'\}, \max\{\lambda, \lambda'\})$-acASR from $\bf \hat{S}$ to $\tilde{S}$ with $\bar{\sf d}$, there exists ${\bf \hat{x}'} \in {\bf \hat{r}}({\bf \hat{x}}, \hat{u})$ such that
\begin{align*}
\hat{x}' \in {\bf \hat{x}'} \ \land \ ({\bf \hat{x}}', \tilde{x}') \in {\bf R}_X(\kappa + \kappa' + \max \{ \beta, \beta' \} \epsilon + \max \{ \lambda, \lambda' \} \bar{\sf d}(\hat{u}, \check{u})).
\end{align*}
Thus, we have
\begin{align}
\label{eq_hat_dash}
\exists \hat{x}' \in {\bf \hat{x}'}: (\hat{x}', x') \in {R}_X(\kappa + \beta \epsilon_1 + \lambda {\sf d}(\hat{u}, u)).
\end{align}
By the ASR ${\bf \hat{R}_C}$, we have
\begin{align*}
\exists {\bf \hat{x}'_C} \in {\bf \hat{r}_C}({\bf \hat{x}_C}, \hat{u}_C): ({\bf \hat{x}'_C}, {\bf \hat{x}'}) \in {\bf \hat{R}_{CX}}.
\end{align*}
By the definition of the composed systems, we have
\begin{align*}
(({\bf \hat{x}'_C}, {\bf \hat{x}}'), \tilde{x}') \in {\bf \bar{r}_C}((({\bf \hat{x}_C}, {\bf \hat{x}}), \tilde{x}), ((\hat{u}_C, \hat{u}), \check{u})).
\end{align*}
On the other hand, by the definition of $\bar{\sf d}_{\sf C}$ and (\ref{assum_map}), we have
\begin{align*}
 \bar{\sf d}_{\sf C}(((\hat{u}_C, \hat{u}), \check{u}), u) = \bar{\sf d}(\hat{u}, \check{u}) \ge {\sf d}(\hat{u}, u) + \check{\sf d}(u, \check{u}).
\end{align*}
Therefore, by (\ref{eq_chk_R}) and (\ref{eq_hat_dash}), we have
\begin{align*}
((({\bf \hat{x}'_C}, {\bf \hat{x}}'), \tilde{x}'), x') \in {\bf \bar{R}}_{{\bf C}X}(\kappa+\kappa'+\max \{ \beta,  \beta' \} \epsilon + \max \{ \lambda, \lambda' \} \bar{\sf d}_{\sf C}(((\hat{u}_C, \hat{u}), \check{u}), u)).
\end{align*}
%\qed
\end{enumerate}
%\end{proof}

\section{Illustrative Example}
\label{Sec_illustrative_example}

\subsection{Physical Plant}

%We consider a mass moving in the one-dimensional space without friction as shown in Fig.~ \ref{fig_plant}.
%We can measure a discretized position of the mass as an output.
%The dynamics of the plant is described as follows:
\begin{figure}[b]
%\begin{minipage}{0.5\hsize}
%\centering
%\includegraphics[width=8cm,bb=0 0 525 233]{force1.png}
%\caption{The motion of the mass.}
%\label{fig_plant}
%\end{minipage}
%\begin{minipage}{0.5\hsize}
\centering
\includegraphics[width=5.5cm,bb=0 0 604 361]{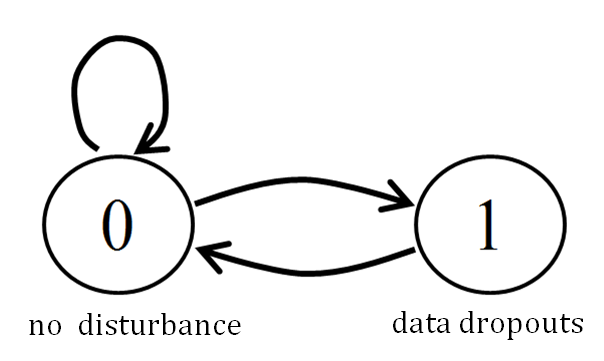}
\caption{The dynamics of the communication channel.}
\label{fig_channel}
%\end{minipage}
\end{figure}
We consider a physical plant given by
\begin{align}
\left\{
\begin{array}{l}
\left[
 \begin{array}{c}
   {\xi}_1[k+1]\\
   {\xi}_2[k+1]
 \end{array}
\right]
=\left[
 \begin{array}{ccc}
   0.5 & 0\\
   0 & 0.25
 \end{array}
\right]\left[
 \begin{array}{c}
   {\xi}_1[k]\\
   {\xi}_2[k]
 \end{array}
\right] + \left[
 \begin{array}{c}
  3.6056\\
  3.9051
 \end{array}
\right] u[k],\\
y_c[k] = \left[
 \begin{array}{ccc}
   2.7042 & 2.2535
 \end{array}
 \right] \left[
 \begin{array}{c}
  {\xi}_1[k]\\
  {\xi}_2[k]
 \end{array}
 \right], \\
y[k] = rd_{\mathbb{Z}} \left( y_c[k]  \right),
\label{eq_contractive}
\end{array}
\right.
\end{align}
where $y_c[k]$ is the output of the plant, and $y[k]$ is the measured value of the sensor.
$rd_{\mathbb{Z}}(y)$ is a rounded value of $y$ to an integer.
We will design a symbolic controller determining the control input $u[k]$. 
Data transmission between the plant and the controller is done via unreliable communication channels where data dropouts sometimes occur.
If data dropouts occur, the control signal $u$ (resp. the output signal $y$) is set to be $0$ at the plant (resp. at the controller).
Assume that data dropouts never occur consecutively.
The dynamics of communication channels is modeled by a system shown in Fig.~\ref{fig_channel}.
First, we introduce a system with outputs $(S, Y, H) = (X, X_{0}, U, r, Y, H)$ that represents the dynamics of the plant and the dynamics of the communication channels, where $X = \mathbb{R}^2 \times \{ 0, 1\} \times \{ 0, 1 \}$, $X_0 = \{ [ 0 \ 0 \ 0 \ 0 ]^T \}$, $U=\{ 0.032, 0.064 \}$, and $Y = \mathbb{Z}$. $H: X \to Y$ is defined as follows: 
\begin{align*}
H(\left[ \ \xi_1[k] \ \xi_2[k] \ \xi_3[k] \ \xi_4[k] \right]^T) = \begin{cases}
rd_{\mathbb{Z}}\left\{ 2.7042 \xi_1[k] + 2.2535 \xi_2[k] \right\} &\mbox{if } \xi_4[k] = 0,\\
0 &\mbox{if } \xi_4[k] = 1.
\end{cases}
\end{align*}
Note that $\xi_3[k]$ (resp. $\xi_4[k]$) is a state of communication channels from the controller to the physical plant (resp. vice versa), and their Boolean values correspond to the state number of Fig.~\ref{fig_channel}.
The transition map $r: X \times U \to 2^{X}$ is defined as follows: 
\begin{align*}
&r(\left[ \ \xi_1 \ \xi_2 \ 0 \ 0 \right]^T, u) = \left\{  \left[ \ \xi'_1 \ \xi'_2 \ 0 \ 0 \right]^T,  \left[ \ \xi'_1 \ \xi'_2 \ 0 \ 1 \right]^T, \left[ \ \xi''_1 \ \xi''_2 \ 1 \ 0 \right]^T, \left[ \ \xi''_1 \ \xi''_2 \ 1 \ 1 \right]^T  \right\},\\
&r(\left[ \ \xi_1 \ \xi_2 \ 0 \ 1 \right]^T, u) = \left\{  \left[ \ \xi'_1 \ \xi'_2 \ 0 \ 0 \right]^T, \left[ \ \xi''_1 \ \xi''_2 \ 1 \ 0 \right]^T \right\},\\
&r(\left[ \ \xi_1 \ \xi_2 \ 1 \ 0 \ \right]^T, u) = \left\{  \left[ \ \xi'_1 \ \xi'_2 \ 0 \ 0 \right]^T, \left[ \ \xi'_1 \ \xi'_2 \ 0 \ 1 \right]^T \right\}, \ \ r(\left[ \ \xi_1 \ \xi_2 \ 1 \ 1 \right]^T, u) = \left\{  \left[ \ \xi'_1 \ \xi'_2 \ 0 \ 0 \right]^T \right\},
\end{align*}
where
\begin{align}
\left[
 \begin{array}{c}
   {\xi}'_1\\
   {\xi}'_2
 \end{array}
\right] 
=
\begin{array}{l}
\left[
 \begin{array}{ccc}
   0.5 & 0\\
   0 & 0.25
 \end{array}
\right]\left[
 \begin{array}{c}
   {\xi}_1\\
   {\xi}_2
 \end{array}
\right] + \left[
 \begin{array}{c}
  3.6056\\
  3.9051
 \end{array}
\right]u , \ \ 
\left[
 \begin{array}{c}
   {\xi}''_1\\
   {\xi}''_2
 \end{array}
\right] 
=\left[
 \begin{array}{ccc}
   0.5 & 0\\
   0 & 0.25
 \end{array}
\right]\left[
 \begin{array}{c}
   {\xi}_1\\
   {\xi}_2
 \end{array}
\right].
\label{eq_transition_relation}
\end{array}
\end{align}
Second, let $[A]_{\eta} := \{ x \in A \ | \ \exists k \in \mathbb{Z}^n : x = 2 k \eta \}$ for a set $A \subseteq \mathbb{R}^{n}$.
Then, we introduce a c-abstracted system $\hat{S} = (\hat{X}, \hat{X}_0, \hat{U}, \hat{r})$ where $\hat{X} = [[0, 0.4]^2]_{0.005} \times \{ 0, 1 \} \times \{ 0, 1 \}$, $\hat{X}_0 = \{ [ 0 \ 0 \ 0 \ 0 ]^T \}$, and $\hat{U}=\{ 0.032, 0.064 \}$. 
The transition map $\hat{r}: \hat{X} \times \hat{U} \to 2^{\hat{X}}$is defined as follows: 
\begin{align*}
&\hat{r}(\left[ \ \hat{\xi}_1 \ \hat{\xi}_2 \ 0 \ 0 \right]^T, \hat{u}) = \left\{  \left[ \ \hat{\xi}'_1 \ \hat{\xi}'_2 \ 0 \ 0 \right]^T,  \left[ \ \hat{\xi}'_1 \ \hat{\xi}'_2 \ 0 \ 1 \right]^T, \left[ \ \hat{\xi}''_1 \ \hat{\xi}''_2 \ 1 \ 0 \right]^T, \left[ \ \hat{\xi}''_1 \ \hat{\xi}''_2 \ 1 \ 1 \right]^T  \right\},\\
&\hat{r}(\left[ \ \hat{\xi}_1 \ \hat{\xi}_2 \ 0 \ 1 \right]^T, \hat{u}) = \left\{  \left[ \ \hat{\xi}'_1 \ \hat{\xi}'_2 \ 0 \ 0 \right]^T, \left[ \ \hat{\xi}''_1 \ \hat{\xi}''_2 \ 1 \ 0 \right]^T \right\},\\
&\hat{r}(\left[ \ \hat{\xi}_1 \ \hat{\xi}_2 \ 1 \ 0 \ \right]^T, \hat{u}) = \left\{  \left[ \ \hat{\xi}'_1 \ \hat{\xi}'_2 \ 0 \ 0 \right]^T, \left[ \ \hat{\xi}'_1 \ \hat{\xi}'_2 \ 0 \ 1 \right]^T \right\}, \ \ 
\hat{r}(\left[ \ \hat{\xi}_1 \ \hat{\xi}_2 \ 1 \ 1 \right]^T, \hat{u}) = \left\{  \left[ \ \hat{\xi}'_1 \ \hat{\xi}'_2 \ 0 \ 0 \right]^T \right\},
\end{align*}
where
\begin{align}
\left[
 \begin{array}{c}
   \hat{\xi}'_1\\
   \hat{\xi}'_2
 \end{array}
\right] 
= rd_2 \Biggl(
\begin{array}{l}
\left[
 \begin{array}{ccc}
   0.5 & 0\\
   0 & 0.25
 \end{array}
\right]\left[
 \begin{array}{c}
   \hat{\xi}_1\\
   \hat{\xi}_2
 \end{array}
\right] + \left[
 \begin{array}{c}
  3.6056\\
  3.9051
 \end{array}
\right]\hat{u} \Biggr), \ \ 
\left[
 \begin{array}{c}
   \hat{\xi}''_1\\
   \hat{\xi}''_2
 \end{array}
\right] 
= rd_2 \Biggl( 
\left[
 \begin{array}{ccc}
   0.5 & 0\\
   0 & 0.25
 \end{array}
\right]\left[
 \begin{array}{c}
   \hat{\xi}_1\\
   \hat{\xi}_2
 \end{array}
\right]\Biggr).
\label{eq_transition_relation}
\end{array}
\end{align}
Note that $rd_{2}(y)$ rounds $y$ off to two decimal place. 
Then, the following relation ${R}(\epsilon) \subseteq \hat{X} \times {X} \times \hat{U} \times {U}$ is a $(0.005, 0.5, 0)$-acASR from $\hat{S}$ to ${S}$:
\begin{align}
\label{R(epsilon)_ex}
{R}(\epsilon) = \left\{ \left. \left(
\left[
 \begin{array}{c}
   \hat{\xi}_1\\
   \hat{\xi}_2\\
   \hat{\xi}_3\\
   \hat{\xi_4}
 \end{array}
\right],
\left[
 \begin{array}{c}
   {\xi}_1\\
   {\xi}_2\\
   {\xi}_3\\
   {\xi}_4
 \end{array}
\right],
\hat{u},
{u}
\right) \ \right| \ \hat{u} = {u} \ \land \ 
\left|
\left[
 \begin{array}{c}
   \hat{\xi}_1\\
   \hat{\xi}_2
 \end{array}
\right]-\left[
 \begin{array}{c}
   {\xi}_1\\
   {\xi}_2
 \end{array}
\right] 
\right| \le \epsilon \ \land \ \hat{\xi}_3 = \xi_3 \ \land \ \hat{\xi}_4 = \xi_4
\right\}.
\end{align}

The desired behavior is that $y_c[k]$ converges to $1$.
Then, we introduce $\hat{S}_C = (\hat{X}_C, \hat{X}_{C0}, \hat{U}_C, \hat{r}_C)$ as shown in Fig.~\ref{fig_SChat} where $\hat{X}_{C0} = \left\{ [ 0 \ 0 \ 0 \ 0 ]^T \right\}$ and $\hat{U}_C = \{ 0.032, 0.064 \}$.
The red arrows in Fig.~\ref{fig_SChat} describe the transitions when $\hat{u}_C = 0.064$, and the black arrows describe the transitions when $\hat{u}_C = 0.032$.

Then, the following relation $\hat{R}_C \subseteq \hat{X}_C \times \hat{X} \times \hat{U}_C, \hat{U}$ is an ASR from $\hat{S}_C$ to $\hat{S}$:
\begin{align}
\label{RChat_ex}
\hat{R}_C = \{ (\hat{x}_C, \hat{x}, \hat{u}_C, \hat{u}) \ | \ \hat{x}_C = \hat{x} \ \land \ \hat{u}_C = \hat{u}  \}.
\end{align}
Third, we introduce an o-abstracted system $\check{S} = (\check{X}, \check{X}_0, \check{U}, \check{r})$ where $\check{X} = [[0, 0.4]^2]_{0.05} \times \{ 0, 1 \} \times \{ 0, 1 \}$, $\check{X}_0 = \{ [0 \ 0 \ 0 \ 0]^T \}$, and $\check{U}=\{ 0.032, 0.064 \}$.
The transition map $\check{r}: \check{X} \times \check{U} \to 2^{\check{X}}$is defined as follows: 
\begin{align*}
&\check{r}(\left[ \ \check{\xi}_1 \ \check{\xi}_2 \ 0 \ 0 \right]^T, \check{u}) = \left\{  \left[ \ \check{\xi}'_1 \ \check{\xi}'_2 \ 0 \ 0 \right]^T,  \left[ \ \check{\xi}'_1 \ \check{\xi}'_2 \ 0 \ 1 \right]^T, \left[ \ \check{\xi}''_1 \ \check{\xi}''_2 \ 1 \ 0 \right]^T, \left[ \ \check{\xi}''_1 \ \check{\xi}''_2 \ 1 \ 1 \right]^T  \right\},\\
&\check{r}(\left[ \ \check{\xi}_1 \ \check{\xi}_2 \ 0 \ 1 \right]^T, \check{u}) = \left\{  \left[ \ \check{\xi}'_1 \ \check{\xi}'_2 \ 0 \ 0 \right]^T, \left[ \ \check{\xi}''_1 \ \check{\xi}''_2 \ 1 \ 0 \right]^T \right\},\\
&\check{r}(\left[ \ \check{\xi}_1 \ \check{\xi}_2 \ 1 \ 0 \ \right]^T, \check{u}) = \left\{  \left[ \ \check{\xi}'_1 \ \check{\xi}'_2 \ 0 \ 0 \right]^T, \left[ \ \check{\xi}'_1 \ \check{\xi}'_2 \ 0 \ 1 \right]^T \right\}, \ \ 
\check{r}(\left[ \ \check{\xi}_1 \ \check{\xi}_2 \ 1 \ 1 \right]^T, \check{u}) = \left\{  \left[ \ \check{\xi}'_1 \ \check{\xi}'_2 \ 0 \ 0 \right]^T \right\},
\end{align*}
where
\begin{align}
\left[
 \begin{array}{c}
   \check{\xi}'_1\\
   \check{\xi}'_2
 \end{array}
\right] 
= rd_1 \Biggl(
\begin{array}{l}
\left[
 \begin{array}{ccc}
   0.5 & 0\\
   0 & 0.25
 \end{array}
\right]\left[
 \begin{array}{c}
   \check{\xi}_1\\
   \check{\xi}_2
 \end{array}
\right] + \left[
 \begin{array}{c}
  3.6056\\
  3.9051
 \end{array}
\right]\check{u} \Biggr), \ \ 
\left[
 \begin{array}{c}
   \check{\xi}''_1\\
   \check{\xi}''_2
 \end{array}
\right] 
= rd_1 \Biggl( 
\left[
 \begin{array}{ccc}
   0.5 & 0\\
   0 & 0.25
 \end{array}
\right]\left[
 \begin{array}{c}
   \check{\xi}_1\\
   \check{\xi}_2
 \end{array}
\right]\Biggr).
\label{eq_transition_relation}
\end{array}
\end{align}
Note that $rd_{1}(y)$ rounds $y$ off to one decimal place. 

Then, the following relation $\check{R}(\epsilon) \subseteq {X} \times \check{X} \times {U} \times \check{U}$ is a $(0.05, 0.5, 0)$-acSR from ${S}$ to $\check{S}$:
\begin{align}
\label{checkR(epsilon)_ex}
\check{R}(\epsilon) = \left\{ \left. \left(
\left[
 \begin{array}{c}
   {\xi}_1\\
   {\xi}_2\\
   {\xi}_3\\
   {\xi}_4
 \end{array}
\right],
\left[
 \begin{array}{c}
   \check{\xi}_1\\
   \check{\xi}_2\\
   \check{\xi}_3\\
   \check{\xi}_4
 \end{array}
\right],
u,
\check{u}
\right) \ \right| \ {u} = \check{u} \ \land \ 
\left|
\left[
 \begin{array}{c}
   {\xi}_1\\
   {\xi}_2
 \end{array}
\right]-\left[
 \begin{array}{c}
   \check{\xi}_1\\
   \check{\xi}_2
 \end{array}
\right] 
\right| \le \epsilon \ \land \ {\xi}_3 = \check{\xi}_3 \ \land \ {\xi}_4 = \check{\xi}_4
\right\}.
\end{align}

\subsection{Output Feedback Control}

We construct an observer $\tilde{S}=(\tilde{X}, \tilde{X}_0, \tilde{U}, \tilde{r})$ induced by $\check{S}$ as shown in Definition \ref{def_observer}. 
Consider the ASR $\hat{R}_C$, $(0.005, 0.5, 0)$-acASR $R(\epsilon)$, and $(0.05, 0.5, 0)$-acSR $\check{R}(\epsilon)$ defined in the previous subsection.
%Note that $0.032 \in [-0.04, 0.04]_{0.004}$.
Then, it is shown that $\hat{R}_C$ satisfies (\ref{restrictive_R_Chat}), that $R(\epsilon)$ satisfies (\ref{assum_R(epsilon)}), and that $\check{R}(\epsilon)$ satisfies (\ref{assum_Rcheck(epsilon)}).
Note that $U=\hat{U}=\check{U}$.
We introduce ${\bf \hat{S}}=({\bf \hat{X}}, {\bf \hat{X}_0}, {\bf \hat{U}}, {\bf \hat{r}})$ induced by $\hat{S}$ and ${\bf \hat{S}_C}=({\bf \hat{X}_C}, {\bf \hat{X}_{C0}}, {\bf \hat{U}_C}, {\bf \hat{r}_C})$ induced by $\hat{S}_C$ defined in Definitions \ref{def_bf_Shat} and \ref{def_bf_S_Chat}, respectively.
Now, we use Theorems \ref{result_bf_R_C(epsilon)} and \ref{result_main} to design the output feedback controller $\bar{\bf S}_{\sf C}$, where $\bar{\bf R}_{\bf C}(\epsilon)$ is a $(0.055, 0.5, 0)$-acASR from $\bar{\bf S}_{\bf C}$ to $S$.

\subsection{Simulation Result}

By the computer simulation, it is shown that the observer $\tilde{S}$ has 10 states, and that the output feedback controller $\bar{\bf S}_{\bf C}$ has 27 states.
The occurrences of data dropouts are shown in Fig.~\ref{fig_disturbance}.
The time response of $y_c[k]$ is shown in Fig.~\ref{fig_position}.
It is shown that $y_c[k]$ converges to $1$ though sometimes deviates by the input disturbances.
The numbers of candidates of the current state are shown in Fig.~\ref{fig_number}.
Since we have $\check{X} \subseteq \hat{X}$, it is noticed that the number of candidates listed up by the controller is always larger than that by the observer.

\begin{figure}
\begin{minipage}{0.5\hsize}
\centering
\includegraphics[width=\hsize]{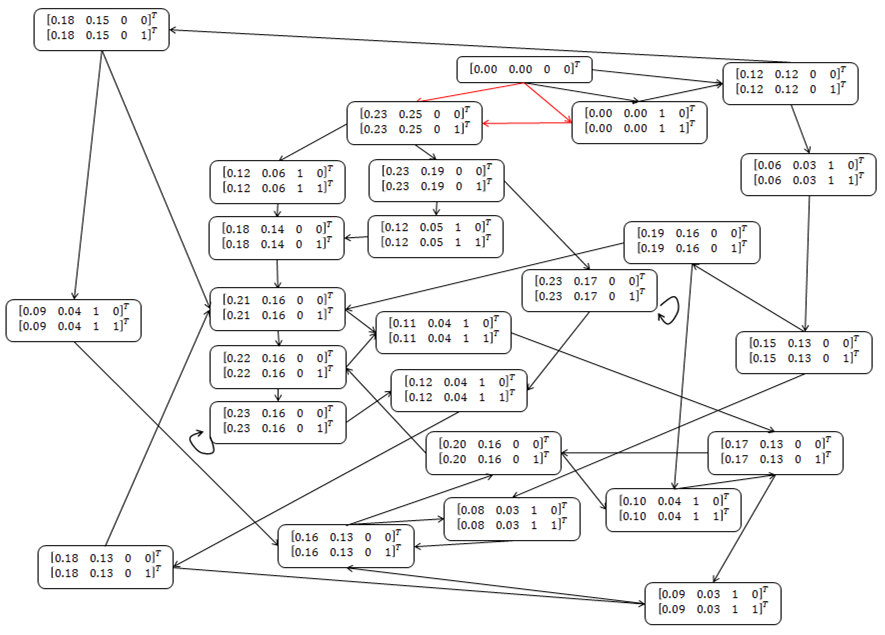}
\caption{The behavior of $\hat{S}_C$.}
\label{fig_SChat}
\end{minipage}
\begin{minipage}{0.5\hsize}
\centering
\includegraphics[width=\hsize]{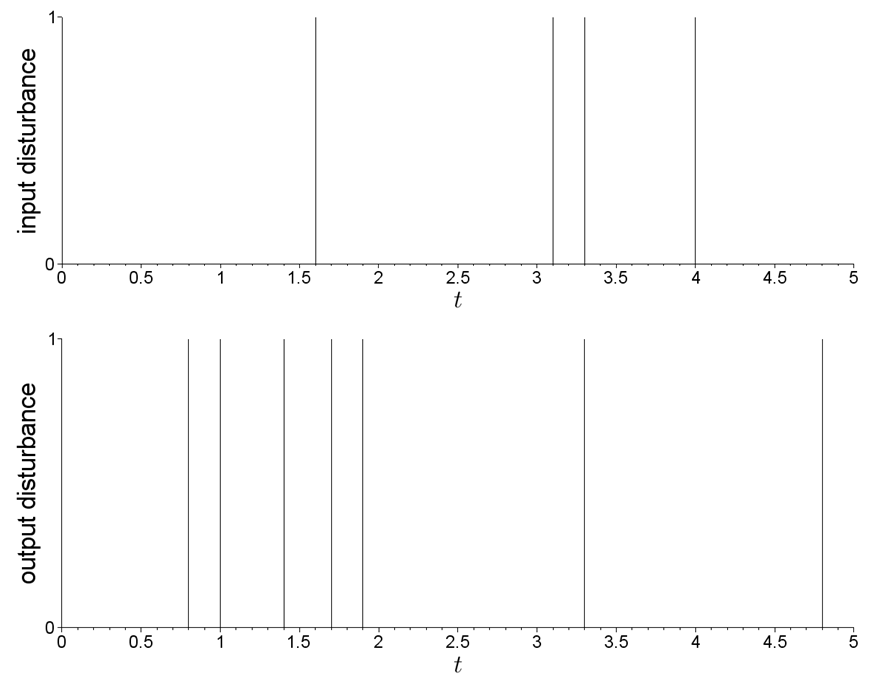}
\caption{The occurrences of data dropouts.}
\label{fig_disturbance}
\end{minipage}
\end{figure}

\begin{figure}
\begin{minipage}{0.5\hsize}
\centering
\includegraphics[width=\hsize]{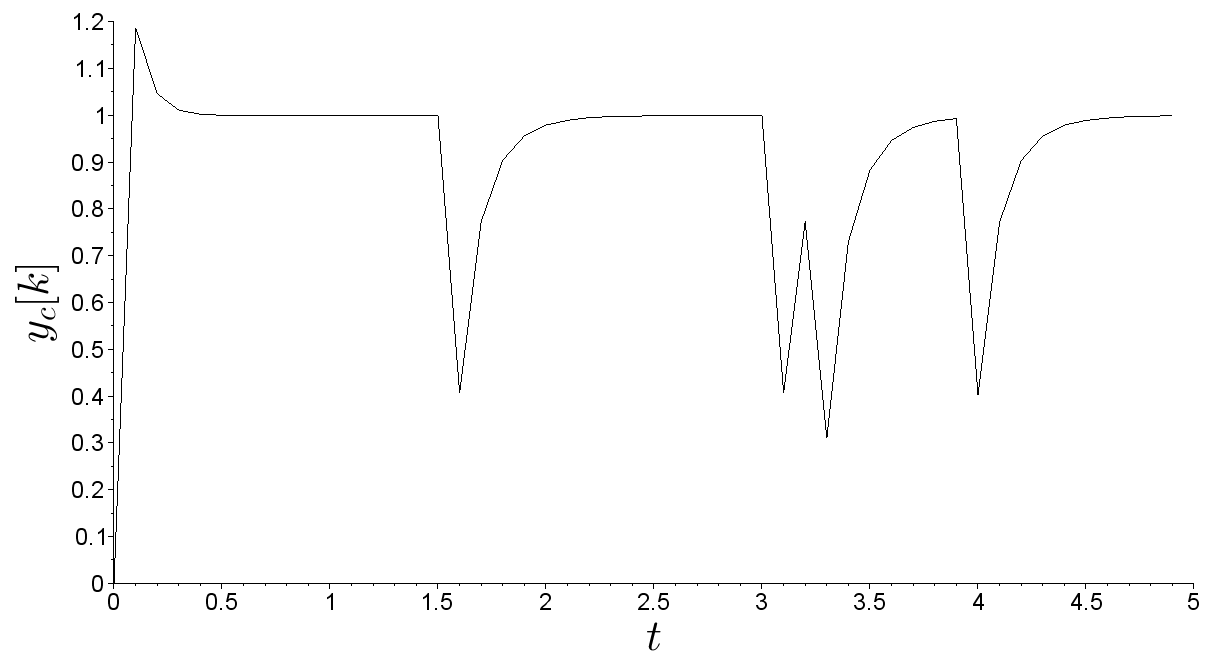}
\caption{The time response of $y_c[k]$.}
\label{fig_position}
\end{minipage}
\begin{minipage}{0.5\hsize}
\centering
\includegraphics[width=\hsize]{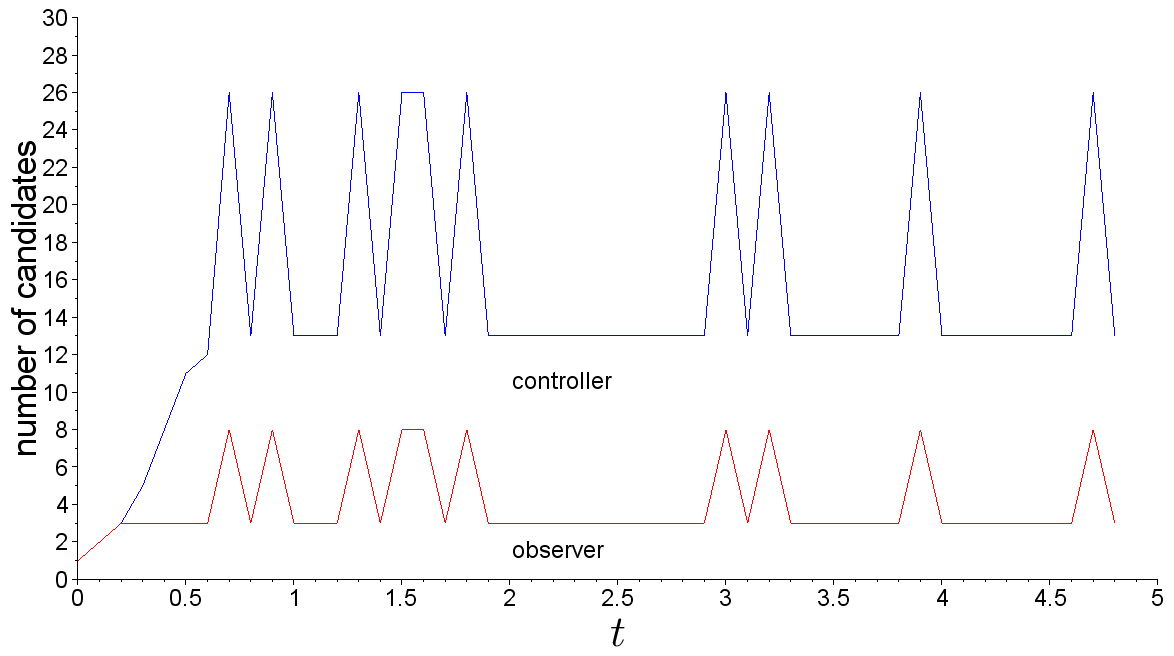}
\caption{The numbers of candidates of the current state.}
\label{fig_number}
\end{minipage}
\end{figure}

\section{Conclusion}

We consider a symbolic design of an observer that lists up all possible candidates of the current state of the physical plant.
In order to design a symbolic output feedback controller, two abstracted systems are introduced.
One abstracted system needs an acASR to construct a feedback controller.
The other needs an acSR to estimate the state of the plant.
They are given independently, which means that the separation principle of the control and the observation holds. 
We proved that there exists an acASR from the proposed output feedback controller to the physical plant without introducing the distance.

It is shown in \cite{6760490, rungger2014abstracting, 2013arXiv1310.5199R, 6882838} that the input-output dynamical stability is preserved under an acASR.
Thus, it is future work to show the stability of the controlled plant. 

\section*{Acknowledgment}
This work was supported by JSPS KAKENHI No.~15K14007.

%\nocite{*} %when no bib reference
\bibliographystyle{eptcs}
\bibliography{generic2}
\end{document}